\documentclass[11pt]{amsart}


\usepackage{amsfonts,latexsym}
\usepackage{amsmath}
\usepackage{amstext}
\usepackage{amssymb}
\usepackage{color}
\usepackage{graphicx,diagrams}
\graphicspath{{figures/}}




\newcommand{\R}{\mathbb{R}}

\newcommand{\N}{{\mathbb{N}}}

\newcommand{\argmin}{\textnormal{arg}\min}

\newcommand{\St}{\mathcal{S}}
\newarrow{Tot} ----{triangle}




\topmargin  = 0.0 in
\leftmargin = 0.9 in
\rightmargin = 1.0 in
\evensidemargin = -0.10 in
\oddsidemargin =  0.10 in
\textheight = 8.5 in
\textwidth  = 6.6 in
\setlength{\parskip}{2mm}
\setlength{\parindent}{0mm}


\newtheorem{theorem}{Theorem}[section]
\newtheorem{lemma}[theorem]{Lemma}

\theoremstyle{definition}
\newtheorem{definition}[theorem]{Definition}

\theoremstyle{remark}
\newtheorem{remark}[theorem]{Remark}

\numberwithin{equation}{section}


\begin{document}

\title[Controller Synthesis for Safety and Reachability via Approximate Bisimulation]
      {Controller Synthesis for Safety and Reachability \\ via Approximate Bisimulation}
\thanks{This work was supported by the Agence Nationale de la Recherche (VEDECY project - ANR 2009 SEGI 015 01).}


\author[Antoine Girard]{Antoine Girard}
\address{Laboratoire Jean Kuntzmann \\
Universit\'e de Grenoble \\
B.P. 53, 38041 Grenoble, France} \email{antoine.girard@imag.fr}


\maketitle


\begin{abstract}
In this paper, we consider the problem of controller design using approximately bisimilar abstractions with an emphasis on safety and reachability specifications.
We propose abstraction-based approaches to solve both classes of problems. We start by synthesizing a controller for an approximately bisimilar abstraction. Then, using a concretization procedure, we obtain a controller for our initial system that is proved ``correct by design''.
We provide guarantees of performance by giving estimates of the distance of the synthesized controller to the maximal (i.e the most permissive) safety controller or
to the time-optimal reachability controller.   
Finally, we use the presented techniques combined with discrete approximately bisimilar abstractions of switched systems developed recently, for switching controller synthesis.
\end{abstract}
\section{Introduction}

The use of discrete abstractions has become a standard approach to hybrid systems design~\cite{raisch1998,moor1999,HabetsS01,tabuada2006,KloetzerB06,reissig2009}.
The benefit of this approach is double. Firstly, by abstracting the
continuous dynamics, controller synthesis problems can be efficiently solved using techniques developed in the areas of supervisory control of discrete-event systems~\cite{Ramadge87} or algorithmic game theory~\cite{arnold2003}. Secondly, if the behaviors of the original system and of the discrete abstraction are formally related by an inclusion or equivalence relationship, the synthesized controller is known to be correct by design and thus the need of formal verification is reduced. 
%
Abstraction, using traditional systems behavioral relationships, relies on inclusion or equality of observed behaviors.
One of the most common notions is that of bisimulation equivalence~\cite{milner1989}. 
However, for systems observed over metric spaces, requiring strict equality
of observed behaviors is often too strong. 
Indeed, the class of continuous or hybrid systems admitting bisimilar discrete abstractions is quite restricted~\cite{alur2000,tabuada2009}. 
In~\cite{A-GirPap07}, a notion of approximate bisimulation, which only asks for closeness of observed behaviors, was introduced.
This relaxation made it possible to extend the class of systems for which discrete abstractions can be 
computed~\cite{A-PolGirTab08,B-GirPolTab08}.

This paper deals with the synthesis of controllers using approximately bisimilar abstractions with an emphasis on safety and reachability problems.
Safety problems consist in synthesizing a controller that restricts the behaviors of a system so that its outputs remain in some specified safe set.
One is usually interested in designing a controller that is as permissive as possible since this makes it possible, using modular approaches, to ensure, a posteriori, secondary control objectives (see e.g.~\cite{Ramadge87}). 
Reachability problems consist in synthesizing a controller that steers the observations of the system to some target region while keeping them in a given safe set along the way. In addition, in order to choose among the possible controllers, we try to minimize the time to reach the target. Hence, we consider a time-optimal control problem.
We propose abstraction-based approaches to solve both classes of problems. We start by synthesizing a controller for an approximately bisimilar abstraction of our concrete system.
Then, using a concretization procedure that is problem-specific, we obtain a controller for our concrete system that is proved ``correct by design''.
For safety problems, we provide estimates of the distance between the synthesized controller and the maximal (i.e the most permissive) safety controller.
For reachability problems, we provide estimates of the distance between the performances of the synthesized controller and of the time-optimal controller.  
As an illustration, we use these techniques in combination with the discrete approximately bisimilar abstractions of switched systems developed in~\cite{B-GirPolTab08}, for switching controller synthesis.
Preliminary versions of these results appeared in the conference papers~\cite{girard10,girard10a}. The presentation has been improved and new results
on estimates of the distance to maximal or optimal controllers have been added. Deeper numerical experiments have also been provided.

Controller synthesis using approximately (bi)similar abstractions has also been considered in~\cite{tazaki08,mazo2010}.
In~\cite{tazaki08}, the authors use approximately bisimilar abstractions to design a suboptimal controller for a fixed bounded horizon optimal control problem.
In this paper, we consider time-optimal control for reachability specifications, thus the time horizon is variable. Our work is more closely related to~\cite{mazo2010} where time-optimal control
is considered as well. We shall discuss further in the paper the main differences with our approach.
Regarding safety specifications,
this is the first paper proposing a specific approach using approximately bisimilar abstractions.


\section{Preliminaries}
\label{sec:prel}


We start by introducing the class of transition systems which serves as a common abstract modeling framework for
discrete, continuous or hybrid systems (see e.g.~\cite{alur2000,tabuada2009}).

\begin{definition}
A \textit{transition system} is a tuple
$T=(Q,L,\delta,O,H)
$
consisting of a set of states $Q$;
a set of actions $L$; a transition relation $\delta \subseteq Q\times L \times Q$;
a set of observations $O$;
an output function $H:Q\rightarrow O$.
$T$ is said to be 
\textit{discrete} if $Q$ and $L$ are finite or countable sets, 
\textit{metric} if the set of observations $O$ is equipped with a metric ${d}$.
\end{definition}

The transition $(q,l,q')\in \delta$ will be denoted $q'\in \delta(q,l)$; this 
means that the system can evolve from state $q$ to state $q'$ under the action $l$.
Given a subset of actions $L'\subseteq L$, we denote $\delta(q,L')=\bigcup_{l\in L'} \delta(q,l)$.
An action $l\in L$ belongs to the set of {\it enabled actions} at state $q$, denoted $\textrm{Enab}(q)$, if 
$\delta(q,l) \ne \emptyset$.
If $\textrm{Enab}(q)=\emptyset$, then $q$ is said to be a {\it blocking} state; otherwise it is said to be {\it non-blocking}.
If all states are non-blocking, we say that the transition system $T$ is non-blocking.
The transition system is said to be {\it non-deterministic} if there exists $q\in Q$ and $l\in \textrm{Enab}(q)$ such that $\delta(q,l)$ has several elements.
A {\it trajectory} of the transition system is a finite sequence of states and actions
$
(q_0,l_0),(q_1,l_1),\dots,(q_{N-1},l_{N-1}),q_N
$  
where
$q_{i+1}\in \delta(q_i,l_i)$ for all $i\in \{0,\dots,N-1\}$. $N\in \N$ is referred to as the {\it length} of the trajectory.
The {\it observed behavior} associated to trajectory is the finite sequence of observations $o_0 o_1 o_2 \dots o_N$ where 
$o_i=H(q_i)$, for all $i\in \{0,\dots, N\}$.

This paper deals with controller synthesis for transition systems; we shall consider only static (i.e. without memory) state-feedback controllers. However, we will just use the term controller for brevity.
\begin{definition}
\label{def:controller}
A \textit{controller} for transition system $T$ is a map 
$\St:Q \rightarrow 2^L.$
It is {\it well-defined} if $\St(q) \subseteq \textnormal{Enab}(q)$, for all  $q\in Q$.
The dynamics of the controlled system is described by the transition system $T_\St=(Q,L,\delta_\St,O,H)$ where the transition
relation is given by
$q'\in \delta_\St(q,l)$ if and only if $l \in \St(q)$ and $q'\in \delta(q,l)$.
\end{definition}

Given a subset of states $Q'\subseteq Q$, we will denote $\St(Q')=\bigcup_{q\in Q'} \St(q)$.
Let us remark that a state $q$ of $T_\St$ is non-blocking if and only if $\St(q)\ne \emptyset$.
A controller essentially executes as follows. The state $q$ of $T$ is measured, an action 
$l\in \St(q)$ is selected and actuated. Then, the system takes a transition $q'\in \delta(q,l)$, this is always possible if $\St$ is well-defined.


In this paper, we consider approximate equivalence relationships for transition systems 
defined by approximate bisimulation relations introduced in~\cite{A-GirPap07}.


\begin{definition}\label{def:bisim} Let $T_i=(Q_i,L,\delta_i,O,H_i)$, $i=1,2$,
 be two metric transition systems with the
same sets of actions $L$ and observations $O$ equipped with the 
metric $d$, let $\varepsilon \ge 0$ be a given precision. 
A
relation $R \subseteq Q_1\times Q_2$ is said to be an
{\it $\varepsilon$-approximate bisimulation relation} between $T_1$ and $T_2$
if, for~all $(q_1,q_2)\in R$:
\begin{itemize}
\item $d(H_1(q_1),H_2(q_2)) \le \varepsilon$;

\item $\forall l\in \textnormal{Enab}_1(q_1)$, $\forall q_1'\in \delta_1(q_1,l)$, $\exists q_2' \in \delta_2(q_2,l)$, such that $(q_1',q_2')\in R$;

\item $\forall l\in \textnormal{Enab}_2(q_2)$, $\forall q_2'\in \delta_2(q_2,l)$, $\exists q_1' \in \delta_1(q_1,l)$, such that $(q_1',q_2')\in R$.

\end{itemize}
The transition systems $T_1$ and $T_2$ are said to be
{\it approximately bisimilar} with precision $\varepsilon$, denoted
$T_1\sim_\varepsilon T_2$, if:
\begin{itemize}
\item $\forall q_1\in Q_1$, $\exists q_2 \in Q_2$, such that
$(q_1,q_2) \in R$;

\item $\forall q_2\in Q_2$, $\exists q_1 \in Q_1$, such that
$(q_1,q_2) \in R$.
\end{itemize}
\end{definition}
If $T_1$ is a system we want to control and $T_2$ is a simpler system that we want to use for controller synthesis, then $T_2$ is called an {\it approximately bisimilar abstraction} of $T_1$. 

We will denote
for $q_1\in Q_1$, $R(q_1)=\{q_2\in Q_2|\; (q_1,q_2)\in R\}$ and for $Q_1'\subseteq Q_1$, $R(Q'_1)=\bigcup_{q_1 \in Q_1'} R(q_1)$; for $q_2\in Q_2$, $R^{-1}(q_2)=\{q_1\in Q_1|\; (q_1,q_2)\in R\}$ and
for $Q_2'\subseteq Q_2$, $R^{-1}(Q'_2)=\bigcup_{q_2 \in Q_2'} R^{-1}(q_2)$.

\begin{remark} We assume that systems $T_1$ and $T_2$ have the same sets of actions and that matching transitions in the second and third items of the previous definition share the same input. These conditions can actually be relaxed using the notion of alternating approximate bisimulation relation~\cite{tabuada2009}. The results presented in this paper can be easily extended to that setting.
\end{remark}

The problem of computing approximately bisimilar discrete abstractions has been considered 
for nonlinear control systems~\cite{A-PolGirTab08} and switched systems~\cite{B-GirPolTab08}.
A controller designed for an abstraction can be used to synthesize a controller for the concrete system via a natural {\it concretization} procedure, described in~\cite{tabuada2009,girard10}, which essentially renders the two controlled systems
approximately bisimilar. This is the approach used in~\cite{A-PolGirTab08,B-GirPolTab08,
mazo2010}. However, the controller for the concrete system obtained via this concretization procedure has several drawbacks. 
Firstly, it is generally a dynamic state-feedback controller (i.e. the controller has a memory) when it is known that for some control specifications such as safety~\cite{Ramadge87} or reachability~\cite{bertsekas2000}, it is sufficient to consider static state-feedback controllers.
Secondly, the implementation of this controller requires the encoding of the dynamics of the abstraction which may result in a higher implementation cost. Thirdly, 
if the abstraction is deterministic, then there is essentially no more feedback: at each step the controller selects the control action and its internal state independently from the state of the concrete system. This may cause some robustness issues in case of unmodeled disturbances or fault occurrences. 
In the following, we present specific controller concretization procedure for safety and reachability specifications which do not suffer from the previous drawbacks. These techniques are readily applicable using the discrete abstractions mentioned above.

\section{Controller Synthesis for Safety Specifications}


\subsection{Problem Formulation}

Let $T=(Q,L,\delta,O,H)$ be a transition system,
let $O_s \subseteq O$ be a set of outputs associated with safe states. 
We consider the synthesis problem that consists in determining a controller that keeps the output of the system inside the specified safe set $O_s$. 
\begin{definition}  
A controller $\St$ for $T$ is a {\it safety controller for specification $O_s$} if, 
for all non-blocking states $q_0$
of the controlled system $T_\St$ (i.e. $\St(q_0)\ne \emptyset$),
for all trajectories of $T_\St$ starting from $q_0$, $(q_0,l_0),(q_1,l_1),\dots,(q_{N-1},l_{N-1}),q_N$;
for all $i\in \{0,\dots, N\}$,  $H(q_i) \in O_s$ and $q_N$ is a non-blocking state of $T_\St$ (i.e. $\St(q_N)\ne \emptyset$).
%
%
%
\end{definition}
The condition that all trajectories end in a non-blocking state ensures that, starting from a non-blocking state, the controlled system can evolve indefinitely while keeping its output in the safe set $O_s$. It is easy to verify by induction that an equivalent characterization of safety controllers is given as follows:
\begin{lemma}
\label{lem:safety}
A controller $\St$ for $T$ is a safety controller for specification $O_s$ if and only if for all non-blocking states $q$ 
of the controlled system $T_\St$ (i.e. $\St(q)\ne \emptyset$);
$H(q) \in O_s$ and
for all $q'\in \delta(q,\St(q))$, $q'$ is a non-blocking state of $T_\St$ (i.e. $\St(q')\ne \emptyset$).
\end{lemma}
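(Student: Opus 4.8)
The plan is to prove the two directions of the equivalence separately, since this is an "if and only if" statement characterizing safety controllers by a local (one-step) condition rather than the original trajectory-based (global) definition. The key insight is that the original definition quantifies over all trajectories of arbitrary length, while the lemma's condition is purely local: it only looks at a non-blocking state $q$, checks that its output is safe, and checks that every immediate successor under the controller is again non-blocking. The bridge between local and global is an induction on trajectory length, exactly as the text hints ("It is easy to verify by induction").

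**For the forward direction** (safety controller $\Rightarrow$ local condition), I would argue contrapositively or directly. Suppose $\St$ is a safety controller and let $q$ be any non-blocking state of $T_\St$. Since $q$ is non-blocking, the trivial trajectory of length $N=0$ starting at $q$ (just the single state $q$) is a trajectory of $T_\St$, so the definition immediately gives $H(q)\in O_s$ and that $q$ (as the terminal state $q_N=q_0$) is non-blocking — though we already knew the latter. For the successor condition, take any $l\in\St(q)$ and any $q'\in\delta(q,l)$. Then $(q,l),q'$ is a trajectory of $T_\St$ of length $1$ starting from the non-blocking state $q$, so the safety definition forces its terminal state $q'$ to be non-blocking, i.e. $\St(q')\ne\emptyset$. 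This establishes the local condition.

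**For the reverse direction** (local condition $\Rightarrow$ safety controller), I would fix a non-blocking state $q_0$ and an arbitrary trajectory $(q_0,l_0),(q_1,l_1),\dots,(q_{N-1},l_{N-1}),q_N$ of $T_\St$, and prove by induction on $i$ that each $q_i$ is non-blocking and satisfies $H(q_i)\in O_s$. The base case $i=0$ holds since $q_0$ is assumed non-blocking, so the local hypothesis gives $H(q_0)\in O_s$. For the inductive step, assuming $q_i$ is non-blocking, the transition $q_{i+1}\in\delta(q_i,l_i)$ with $l_i\in\St(q_i)$ means $q_{i+1}\in\delta(q_i,\St(q_i))$; the local condition applied at $q_i$ then yields that $q_{i+1}$ is non-blocking, whereupon the local condition applied at $q_{i+1}$ gives $H(q_{i+1})\in O_s$. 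Taking $i=N$ shows the terminal state $q_N$ is non-blocking, and the safety-set membership holds for all $i$, which is exactly what the safety-controller definition requires.

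**I do not expect any serious obstacle here** — the proof is genuinely routine, as the text anticipates. The only point requiring a little care is the handling of the degenerate length-zero trajectory in the forward direction, ensuring the output-safety of $q$ itself is extracted from the definition; and, symmetrically, making sure the induction in the reverse direction correctly chains "$q_i$ non-blocking" to "$q_{i+1}$ non-blocking" using that $l_i\in\St(q_i)$ so that $q_{i+1}$ lies in $\delta(q_i,\St(q_i))$, the exact set to which the local successor condition applies.
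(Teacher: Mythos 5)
Your proof is correct and follows exactly the route the paper indicates (the paper omits the proof, remarking only that the equivalence "is easy to verify by induction"): length-0 and length-1 trajectories for the forward direction, and induction on trajectory length for the converse. The details you supply — in particular that $l_i\in\St(q_i)$ ensures $q_{i+1}\in\delta(q_i,\St(q_i))$, so the local condition chains along the trajectory — are precisely the ones needed.
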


There are in general several controllers that solve the safety problem.
We are usually interested in synthesizing a controller that enables as many actions as possible.
This notion of permissivity can be formalized by defining a partial order on controllers.
\begin{definition}\label{def:max} Let $\St_1$ and $\St_2$ be two controllers for transition system $T$,
$\St_1$ is {\it more permissive} than $\St_2$, denoted $\St_2 \preceq \St_1$, if for all $q\in Q$, $\St_2(q) \subseteq \St_1(q)$. The controller $\St^*$ for $T$ is the {\it maximal safety controller for specification $O_s$}, if 
$\St^*$ is a safety controller for specification $O_s$, and for all safety controllers $\St$ for specification $O_s$, $\St\preceq \St^*$.
\end{definition}

It is well known that the maximal safety controller exists, is unique
and can be computed using a fixed point algorithm (see e.g.~\cite{Ramadge87,tabuada2009}).
This algorithm is guaranteed to terminate in a finite number of steps provided $H^{-1}(O_s)\subseteq Q$ is a finite set which is often the case
for discrete systems. For other transition systems, there is no guarantee that the algorithm will terminate.
In this case, a synthesis approach based on approximately bisimilar abstractions can help to compute effectively a safety controller with, in addition, an estimation of the distance to maximality.

\subsection{Abstraction-based Controller Synthesis}
\label{sec:safe}

Let $T_i=(Q_i,L,\delta_i,O,H_i)$, $i=1,2$, be metric transition systems such that $T_1\sim_\varepsilon T_2$. 
Let $T_1$ be the system that we want to control and $T_2$ be an approximately bisimilar abstraction of $T_1$.
We present an approach to safety controller synthesis for specification $O_s$.
%
\begin{definition} Let $O' \subseteq O$ and $\varphi \ge 0$. The {\it $\varphi$-contraction} of $O'$ is the subset of $O$ defined as follows
$$
C_\varphi(O')=\left \{ o' \in O' |\; \forall o \in O,\; d(o,o') \le \varphi \implies o \in O' \right\}.
$$
The {\it $\varphi$-expansion} of $O'$  is the subset of $O$ defined as follows
$$
E_\varphi(O')=\left \{ o \in O |\; \exists o' \in O', d(o,o') \le \varphi \right\}.
$$
\end{definition}

By straightforward applications of the previous definitions, we have:
\begin{lemma}\label{lem:contract} Let $O'\subseteq O$ and $\varepsilon\ge 0$, then 
$C_{2\varepsilon}(O') \subseteq C_{\varepsilon}(C_\varepsilon(O'))$,
$E_{\varepsilon}(E_\varepsilon(O')) \subseteq E_{2\varepsilon}(O')$ and
$O'  \subseteq C_{\varepsilon}(E_\varepsilon(O'))$.
\end{lemma}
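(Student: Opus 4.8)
The plan is to prove each of the three inclusions directly from the definitions of $\varphi$-contraction and $\varphi$-expansion, with the triangle inequality for the metric $d$ doing all of the real work. None of the three is deep; the only thing to keep straight is the alternation of universal and existential quantifiers hidden inside $C_\varphi$ and $E_\varphi$, together with the elementary fact that $\varepsilon\le 2\varepsilon$ since $\varepsilon\ge 0$.

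For the first inclusion $C_{2\varepsilon}(O')\subseteq C_\varepsilon(C_\varepsilon(O'))$, I would take $o'\in C_{2\varepsilon}(O')$ and verify the two conditions defining membership in $C_\varepsilon(C_\varepsilon(O'))$. First, since every point within distance $\varepsilon\le 2\varepsilon$ of $o'$ already lies in $O'$, we get $o'\in C_\varepsilon(O')$. Second, for any $o$ with $d(o,o')\le\varepsilon$ I must show $o\in C_\varepsilon(O')$: given any $o''$ with $d(o'',o)\le\varepsilon$, the triangle inequality yields $d(o'',o')\le 2\varepsilon$, so $o''\in O'$ by the hypothesis on $o'$, whence $o\in C_\varepsilon(O')$. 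Together these give $o'\in C_\varepsilon(C_\varepsilon(O'))$.

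For the second inclusion $E_\varepsilon(E_\varepsilon(O'))\subseteq E_{2\varepsilon}(O')$, I would take $o$ in the left-hand set, extract a witness $o_1\in E_\varepsilon(O')$ with $d(o,o_1)\le\varepsilon$, then a further witness $o'\in O'$ with $d(o_1,o')\le\varepsilon$, and conclude $d(o,o')\le 2\varepsilon$ by the triangle inequality, so $o\in E_{2\varepsilon}(O')$. The third inclusion $O'\subseteq C_\varepsilon(E_\varepsilon(O'))$ is the simplest: each $o'\in O'$ is its own witness showing $o'\in E_\varepsilon(O')$ (since $d(o',o')=0\le\varepsilon$), and the same witness shows that any $o$ with $d(o,o')\le\varepsilon$ lies in $E_\varepsilon(O')$; these are exactly the two clauses needed for $o'\in C_\varepsilon(E_\varepsilon(O'))$.

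I do not anticipate a genuine obstacle here, since the claim is flagged as routine. The one place to be careful is the first inclusion, where membership in $C_\varepsilon(C_\varepsilon(O'))$ unfolds into a nested quantifier condition (namely \emph{$o'\in C_\varepsilon(O')$ and every $\varepsilon$-close $o$ lies in $C_\varepsilon(O')$}), and one must not conflate the two layers; every step nonetheless reduces to a single use of the triangle inequality together with $\varepsilon\le 2\varepsilon$.
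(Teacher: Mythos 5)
Your proof is correct and is exactly the routine verification the paper has in mind: the lemma is stated there with no proof beyond the remark that it follows ``by straightforward applications of the previous definitions,'' and your unwinding of the quantifiers plus the triangle inequality supplies precisely those details. The only cosmetic point is that in the first inclusion your check that $o\in C_\varepsilon(O')$ should also note $o\in O'$, but this follows immediately by taking $o''=o$ in the clause you verify, so nothing is missing.
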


%
%

We start by synthesizing a safety controller for the abstraction $T_2$ and the specification $C_\varepsilon(O_s)$.
This controller is denoted $\St_{2,C_\varepsilon}$. We shall not discuss further the synthesis of this controller which can be done, if $T_2$ is discrete, using a fixed point algorithm.
The second step of our approach allows us to design a safety controller for system $T_1$ and specification $O_s$, obtained from the controller $\St_{2,C_\varepsilon}$ using the following concretization procedure:
\begin{theorem}
\label{th:stat}
Let $T_1 \sim_\varepsilon T_2$, let  $R\subseteq Q_1 \times Q_2$
denote the $\varepsilon$-approximate bisimulation relation between $T_1$ and $T_2$.
Let $\St_{2,C_\varepsilon}$ 
be a safety controller for $T_2$ and specification $C_\varepsilon(O_s)$.
Let us define $\St_{1}$, the controller for $T_1$
given by 
\begin{equation}
\label{eq:ref}
\forall q_1\in Q_1,\; \St_{1}(q_1) =  \St_{2,C_\varepsilon}(R(q_1)).
\end{equation}
Then, $\St_{1}$ is well-defined and is a safety controller for specification $O_s$.
\end{theorem}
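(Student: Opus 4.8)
The plan is to verify the two assertions in the order stated — first that $\St_1$ is well-defined, then that it is a safety controller — invoking the characterization of Lemma~\ref{lem:safety} for the second part. Throughout, the reasoning hinges on carefully tracking, for each enabled action $l \in \St_1(q_1)$, \emph{which} abstract state $q_2 \in R(q_1)$ contributed $l$ through the union in~\eqref{eq:ref}, and then matching transitions via the appropriate clause of Definition~\ref{def:bisim}.

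For well-definedness, I would fix $q_1 \in Q_1$ and $l \in \St_1(q_1)$, so that $l \in \St_{2,C_\varepsilon}(q_2)$ for some $q_2 \in R(q_1)$. Since $\St_{2,C_\varepsilon}$ is a well-defined controller, $l \in \textnormal{Enab}_2(q_2)$, hence $\delta_2(q_2,l)\neq\emptyset$. Applying the third clause of the approximate bisimulation relation to $(q_1,q_2)\in R$ then produces, for any $q_2'\in\delta_2(q_2,l)$, a matching $q_1'\in\delta_1(q_1,l)$; in particular $\delta_1(q_1,l)\neq\emptyset$, i.e. $l\in\textnormal{Enab}_1(q_1)$. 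This gives $\St_1(q_1)\subseteq\textnormal{Enab}_1(q_1)$, as required.

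For the safety property, I would take a non-blocking state $q_1$ of $T_{1,\St_1}$ and verify the two conditions of Lemma~\ref{lem:safety}. Since $\St_1(q_1)\neq\emptyset$, there is some $q_2\in R(q_1)$ with $\St_{2,C_\varepsilon}(q_2)\neq\emptyset$, so $q_2$ is non-blocking in $T_{2,\St_{2,C_\varepsilon}}$; because $\St_{2,C_\varepsilon}$ is a safety controller for $C_\varepsilon(O_s)$, Lemma~\ref{lem:safety} gives $H_2(q_2)\in C_\varepsilon(O_s)$. Combining $d(H_1(q_1),H_2(q_2))\le\varepsilon$ (from $(q_1,q_2)\in R$) with the defining property of the $\varepsilon$-contraction yields $H_1(q_1)\in O_s$, which is the first condition. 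For the second condition, I would take $q_1'\in\delta_1(q_1,l)$ for some $l\in\St_1(q_1)$; choosing $q_2\in R(q_1)$ with $l\in\St_{2,C_\varepsilon}(q_2)$ and applying the second clause of the bisimulation relation yields $q_2'\in\delta_2(q_2,l)$ with $(q_1',q_2')\in R$. Since $q_2'\in\delta_2(q_2,\St_{2,C_\varepsilon}(q_2))$ and $\St_{2,C_\varepsilon}$ is a safety controller, $\St_{2,C_\varepsilon}(q_2')\neq\emptyset$; as $q_2'\in R(q_1')$, the union~\eqref{eq:ref} gives $\St_1(q_1')\supseteq\St_{2,C_\varepsilon}(q_2')\neq\emptyset$, so $q_1'$ is non-blocking in $T_{1,\St_1}$.

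The step I expect to require the most care is this second safety condition, precisely because $\St_1(q_1)$ is a union over the whole class $R(q_1)$: an action $l$ may be supplied by one abstract state, while the bisimulation match must be taken with respect to that same state. The union structure is, however, exactly what makes the non-blocking property propagate — it suffices that \emph{some} matching abstract successor $q_2'$ remain non-blocking, and this successor is then fed back into $\St_1(q_1')$ through~\eqref{eq:ref}. The role of the contraction $C_\varepsilon$ is isolated to the output condition: it is what absorbs the $\varepsilon$ mismatch between $H_1(q_1)$ and $H_2(q_2)$ and guarantees $H_1(q_1)\in O_s$ rather than merely $H_1(q_1)\in E_\varepsilon(O_s)$, which is why the abstract synthesis is carried out against the contracted specification in the first place.
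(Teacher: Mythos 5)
Your proposal is correct and follows essentially the same route as the paper's proof: well-definedness via the third clause of Definition~\ref{def:bisim}, then the two conditions of Lemma~\ref{lem:safety}, with the contraction $C_\varepsilon$ absorbing the $\varepsilon$ output mismatch and the union in~\eqref{eq:ref} propagating non-blockingness to $\St_1(q_1')$. No gaps.
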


\begin{proof} 
First, let us show that the controller $\St_{1}$ is well-defined. Let $q_1\in Q_1$, let $l\in \St_1(q_1)$ then, from~(\ref{eq:ref}), there exists $q_2\in Q_2$ such that $(q_1,q_2)\in R$ and $l\in \St_{2,C_\varepsilon}(q_2)$. $\St_{2,C_\varepsilon}$ is well-defined, then
$l\in \text{Enab}_2(q_2)$, i.e. there exists $q_2'\in \delta_2(q_2,l)$.
By Definition~\ref{def:bisim}, it follows that there exists $q'_1\in \delta_1(q_1,l)$ (such that $(q_1',q_2')\in R$), which implies that 
$l\in \text{Enab}_1(q_1)$. Thus, for all $q_1\in Q_1$, $\St_1(q_1)\subseteq \text{Enab}_1(q_1)$;
$\St_{1}$ is well-defined.
Let us now prove that $\St_1$ is a safety controller for the specification $O_s$. 
Let $q_{1}\in Q_1$ such that $\St_1(q_1)\ne \emptyset$, 
let $l\in \St_1(q_1)$, by (\ref{eq:ref}) there
exists $q_2\in Q_2$ such that $(q_1,q_2)\in R$ and $l\in \St_{2,C_\varepsilon}(q_2)$. 
Since $\St_{2,C_\varepsilon}$ is a safety controller for specification $C_\varepsilon(O_s)$ and $\St_{2,C_\varepsilon}(q_2)\ne \emptyset$, we have 
from Lemma~\ref{lem:safety}, that $H_2(q_2)\in C_\varepsilon(O_s)$. By Definition~\ref{def:bisim}, $d(H_1(q_1),H_2(q_2)) \le \varepsilon$ and therefore $H_1(q_1)\in O_s$.
Now, let $q_1'\in\delta(q_1,l)$, by Definition~\ref{def:bisim} there exists $q_2'\in \delta_2(q_2,l)$ such that $(q_1',q_2')\in R$.
Since $\St_{2,C_\varepsilon}$ is a safety controller for specification $C_\varepsilon(O_s)$ and $l\in \St_{2,C_\varepsilon}(q_2)$, we have 
from Lemma~\ref{lem:safety}, that  $\St_{2,C_\varepsilon}(q_2')\ne \emptyset$.
Finally, (\ref{eq:ref}) implies that $\St_{2,C_\varepsilon}(q_2') \subseteq \St_1(q_1')$ and therefore $\St_1(q_1')\ne \emptyset$.
From Lemma~\ref{lem:safety}, $\St_{1}$ is a safety controller for specification $O_s$.
\end{proof}

If we use the maximal safety controller  for $T_2$, 
it is desirable to have an estimate of the distance between the controller given by the concretization equation (\ref{eq:ref}) and the maximal safety controller for $T_1$.
This is given by the following result:
\begin{theorem}
\label{th:approxsafe1} 
Let $\St_{2,C_\varepsilon}^*$ and $\St_{2,E_\varepsilon}^*$ be the maximal safety controllers for $T_2$ and specifications $C_\varepsilon(O_s)$ and $E_\varepsilon(O_s)$ respectively.
Let $\St_{1}$ and $\St_{1,E_{2\varepsilon}}$ be the controllers for $T_1$ obtained by the concretization equation (\ref{eq:ref}) from $\St_{2,C_\varepsilon}^*$ and $\St_{2,E_{\varepsilon}}^*$ respectively.
Let $\St_{1}^*$, $\St_{1,C_{2\varepsilon}}^*$ and $\St_{1,E_{2\varepsilon}}^*$ be the maximal safety controllers for $T_1$ and specifications $O_s$, 
$C_{2\varepsilon}(O_s)$ and $E_{2\varepsilon}(O_s)$ respectively.
 Then,
$$
\St_{1,C_{2\varepsilon}}^* \preceq \St_{1} \preceq \St_{1}^* \preceq \St_{1,E_{2\varepsilon}} \preceq \St_{1,E_{2\varepsilon}}^*.
$$ 
\end{theorem}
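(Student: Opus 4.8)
The four inequalities split into two pairs according to which direction of the concretization they exploit. The plan is to dispose first of the two ``soundness'' inequalities $\St_1 \preceq \St_1^*$ and $\St_{1,E_{2\varepsilon}} \preceq \St_{1,E_{2\varepsilon}}^*$, since each follows almost immediately once the concretized controller is recognised as a safety controller for the appropriate specification. For $\St_1 \preceq \St_1^*$, Theorem~\ref{th:stat} already asserts that $\St_1$ (the concretization of $\St_{2,C_\varepsilon}^*$) is a safety controller for $T_1$ and specification $O_s$; since $\St_1^*$ is by Definition~\ref{def:max} the maximal such controller, $\St_1 \preceq \St_1^*$. For $\St_{1,E_{2\varepsilon}} \preceq \St_{1,E_{2\varepsilon}}^*$ I would reapply Theorem~\ref{th:stat} with the safe set taken to be $E_{2\varepsilon}(O_s)$ instead of $O_s$. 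To invoke it I first check, using Lemma~\ref{lem:contract} and the elementary monotonicity of $C_\varepsilon$ and $E_\varepsilon$, that $E_\varepsilon(O_s) \subseteq C_\varepsilon(E_{2\varepsilon}(O_s))$, so that $\St_{2,E_\varepsilon}^*$, being a safety controller for $E_\varepsilon(O_s)$, is also one for the larger set $C_\varepsilon(E_{2\varepsilon}(O_s))$. Theorem~\ref{th:stat} then makes $\St_{1,E_{2\varepsilon}}$ a safety controller for $E_{2\varepsilon}(O_s)$, and maximality of $\St_{1,E_{2\varepsilon}}^*$ closes the inequality.

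The remaining inequalities $\St_{1,C_{2\varepsilon}}^* \preceq \St_1$ and $\St_1^* \preceq \St_{1,E_{2\varepsilon}}$ are ``completeness'' statements and need the reverse concretization. The key observation is that an $\varepsilon$-approximate bisimulation is symmetric: $R^{-1}$ is an $\varepsilon$-approximate bisimulation between $T_2$ and $T_1$, so Theorem~\ref{th:stat} applies verbatim with the roles of $T_1,T_2$ and of $R,R^{-1}$ interchanged. Concretely, if $\St$ is a safety controller for $T_1$ and specification $C_\varepsilon(O')$, then $q_2 \mapsto \St(R^{-1}(q_2))$ is a safety controller for $T_2$ and specification $O'$. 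I would apply this with $\St = \St_{1,C_{2\varepsilon}}^*$ and $O' = C_\varepsilon(O_s)$: since $C_{2\varepsilon}(O_s) \subseteq C_\varepsilon(C_\varepsilon(O_s))$ by Lemma~\ref{lem:contract}, $\St_{1,C_{2\varepsilon}}^*$ is a safety controller for $C_\varepsilon(C_\varepsilon(O_s)) = C_\varepsilon(O')$, and the dual theorem yields that $q_2 \mapsto \St_{1,C_{2\varepsilon}}^*(R^{-1}(q_2))$ is a safety controller for $T_2$ and $C_\varepsilon(O_s)$. Maximality of $\St_{2,C_\varepsilon}^*$ then gives $\St_{1,C_{2\varepsilon}}^*(R^{-1}(q_2)) \subseteq \St_{2,C_\varepsilon}^*(q_2)$ for every $q_2$. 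The analogous computation with $\St = \St_1^*$ and $O' = E_\varepsilon(O_s)$, using $O_s \subseteq C_\varepsilon(E_\varepsilon(O_s))$ from Lemma~\ref{lem:contract}, gives $\St_1^*(R^{-1}(q_2)) \subseteq \St_{2,E_\varepsilon}^*(q_2)$ for every $q_2$.

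The last step, common to both completeness inequalities, is a pointwise transfer from inclusions indexed by $q_2 \in Q_2$ back to inclusions indexed by $q_1 \in Q_1$. Fix $q_1$; since $T_1 \sim_\varepsilon T_2$ there is some $q_2 \in R(q_1)$, equivalently $q_1 \in R^{-1}(q_2)$, whence (taking the first completeness inequality as the example) $\St_{1,C_{2\varepsilon}}^*(q_1) \subseteq \St_{1,C_{2\varepsilon}}^*(R^{-1}(q_2)) \subseteq \St_{2,C_\varepsilon}^*(q_2) \subseteq \St_{2,C_\varepsilon}^*(R(q_1)) = \St_1(q_1)$, that is $\St_{1,C_{2\varepsilon}}^* \preceq \St_1$; the same chain with the controllers replaced appropriately gives $\St_1^* \preceq \St_{1,E_{2\varepsilon}}$.

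I expect the main obstacle to be bookkeeping rather than anything conceptual: one must match each specification to the precise $C_\varepsilon$- or $E_\varepsilon$-form required by (the dual of) Theorem~\ref{th:stat}, and the correct $\varepsilon$/$2\varepsilon$ accounting rests entirely on the three containments in Lemma~\ref{lem:contract} together with monotonicity. The one genuinely non-routine point to get right is the direction of the set inclusions: a safety controller for a \emph{smaller} safe set is automatically a safety controller for any \emph{larger} one, and it is this monotonicity in the specification, combined with the symmetry of $R$, that lets the single result Theorem~\ref{th:stat} drive all four inequalities.
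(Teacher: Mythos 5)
Your proposal is correct and follows essentially the same route as the paper's own proof: the two ``soundness'' inequalities come from Theorem~\ref{th:stat} plus maximality (with the containment $E_\varepsilon(O_s) \subseteq C_\varepsilon(E_{2\varepsilon}(O_s))$ from Lemma~\ref{lem:contract} for the last one), and the two ``completeness'' inequalities come from applying Theorem~\ref{th:stat} with the roles of $T_1$ and $T_2$ reversed, followed by the pointwise transfer via $q_1 \in R^{-1}(R(q_1))$. The paper merely makes your inline auxiliary controllers $q_2 \mapsto \St(R^{-1}(q_2))$ explicit (as $\St_{2,C_\varepsilon}$, $\tilde\St_1$, $\St_{2,E_\varepsilon}$), but the logical content is identical.
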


\begin{proof} The proof relies on the introduction of several auxiliary controllers. The relations between these controllers are presented in the following sketch, where arrows correspond to application of the concretization equation~(\ref{eq:ref}):
$$
\begin{diagram}
\St^*_{1,C_{2\varepsilon}} & \preceq & \tilde \St_1 & \preceq  & \St_1 & \preceq & \St_1^*
& \preceq & \St_{1,E_{2\varepsilon}} & \preceq & \St^*_{1,E_{2\varepsilon}}\\
\\
&\rdTo~{\text{\footnotesize (\ref{eq:ref})}}  & \uTo~{\text{\footnotesize (\ref{eq:ref})}}  & &\uTo~{\text{\footnotesize (\ref{eq:ref})}} & & \dTo~{\text{\footnotesize (\ref{eq:ref})}} &  &\uTo~{\text{\footnotesize (\ref{eq:ref})}}\\
 & & \St_{2,C_\varepsilon} & \preceq & \St^*_{2,C_\varepsilon} & &\St_{2,E_\varepsilon} & \preceq & \St^*_{2,E_\varepsilon}
\end{diagram}
$$
Let us go into the details of the proof.
From Theorem~\ref{th:stat}, $\St_1$ is a safety controller for specification $O_s$, then 
$\St_{1} \preceq \St_{1}^*$.
Let us prove that $\St_{1,C_{2\varepsilon}}^* \preceq \St_{1}$. Since $\St_{1,C_{2\varepsilon}}^*$ is a safety controller for specification $C_{2\varepsilon}(O_s)$ and from Lemma~\ref{lem:contract}, $C_{2\varepsilon}(O_s) \subseteq C_{\varepsilon}(C_\varepsilon(O_s))$, it is clear that
$\St_{1,C_{2\varepsilon}}^*$ is a safety controller for specification $C_{\varepsilon}(C_\varepsilon(O_s))$.
Now, let us define $\St_{2,C_\varepsilon}$, the controller for $T_2$
such that for $q_2\in Q_2$, $\St_{2,C_\varepsilon}(q_2) = {\St}_{1,C_{2\varepsilon}}^*(R^{-1}(q_2))$. 
The symmetry of approximate bisimulation allows us to reverse the role of $T_1$ and $T_2$ in Theorem~\ref{th:stat}.
This gives that $\St_{2,C_\varepsilon}$
is a safety controller for $T_2$ and specification $C_\varepsilon(O_s)$ which yields 
$\St_{2,C_\varepsilon} \preceq \St_{2,C_\varepsilon}^*$.
We now define $\tilde\St_{1}$, the controller for $T_1$ such that for $q_1\in Q_1$,
$\tilde \St_{1}(q_1) = \St_{2,C_\varepsilon}(R(q_1))$.
Then,  $\St_{2,C_\varepsilon} \preceq \St_{2,C_\varepsilon}^*$ gives $\tilde\St_{1} \preceq \St_1$.
Finally, we remark that for all $q_1\in Q_1$
$\tilde \St_{1}(q_1) = {\St}_{1,C_{2\varepsilon}}^*(R^{-1}(R(q_1))))
$
which leads to ${\St}_{1,C_{2\varepsilon}}^* \preceq \tilde{\St}_{1}$.
Let us show that $\St_{1}^*  \preceq  \St_{1,E_{2\varepsilon}}$.
Since $\St_1^*$ is a safety controller for $O_s$ and since from Lemma~\ref{lem:contract}, $O_s \subseteq C_{\varepsilon}(E_\varepsilon(O_s))$, it is clear that
$\St_{1}^*$ is a safety controller for specification $C_{\varepsilon}(E_\varepsilon(O_s))$. Now let us define $\St_{2,E_\varepsilon}$, the controller for $T_2$
such that for $q_2\in Q_2$, $\St_{2,E_\varepsilon}(q_2) ={\St}_{1}^*(R^{-1}(q_2))$.
By reversing the role of $T_1$ and $T_2$ in Theorem~\ref{th:stat},
we obtain that $\St_{2,E_\varepsilon}$
is a safety controller for $T_2$ and specification $E_\varepsilon(O_s)$. Then,  
$
\St_{2,E_\varepsilon} \preceq \St_{2,E_\varepsilon}^*$. Then, for all $q_1\in Q_1$,
$
\St_1^*(q_1) \subseteq  {\St}_{1}^*(R^{-1}(R(q_1))) = 
{\St}_{2,E_\varepsilon}(R(q_1)) \subseteq {\St}_{2,E_\varepsilon}^*(R(q_1))
=
\St_{1,E_{2\varepsilon}}(q_1).
$
Hence,  $\St_{1}^*  \preceq  \St_{1,E_{2\varepsilon}}$.
Finally,
since ${\St}_{2,E_\varepsilon}^*$ is a safety controller for $T_2$ and specification $E_\varepsilon(O_s)$ and since by Lemma~\ref{lem:contract},
$E_\varepsilon(O_s) \subseteq  C_\varepsilon(E_{\varepsilon}(E_{\varepsilon}(O_s))) \subseteq C_\varepsilon(E_{2\varepsilon}(O_s))$, it follows
that ${\St}_{2,E_\varepsilon}^*$ is a safety controller for $T_2$ and specification $ C_\varepsilon(E_{2\varepsilon}(O_s))$. Then, from Theorem~\ref{th:stat}, $\St_{1,E_{2\varepsilon}}$ is a safety controller for $T_1$ and specification $E_{2\varepsilon}(O_s)$ which yields $\St_{1,E_{2\varepsilon}} \preceq \St_{1,E_{2\varepsilon}}^*$.
\end{proof}

By computing the controllers $\St_1$ and $\St_{1,E_{2\varepsilon}}$ one is able to give a certified upper-bound on the distance between the controller $\St_1$ we will 
use to control $T_1$ and the maximal safety controller $\St_1^*$.
Moreover, if the safety problem is somehow robust, in the sense that $\St^*_{1,C_{2\varepsilon}}$ and
$\St^*_{1,E_{2\varepsilon}}$ approach $\St_1^*$ as $\varepsilon$ approaches $0$ (i.e. slightly different specifications
result in only slightly different maximal controllers); then 
$\St_1$ and $\St_{1,E_{2\varepsilon}}$  also approach
$\St_1^*$ as $\varepsilon$ gets smaller and $\St_1^*$ can be approximated arbitrarily
close.

\section{Controller Synthesis for Reachability Specifications}

 
\subsection{Problem Formulation}

Let $T=(Q,L,\delta,O,H)$ be a transition system, let $O_s\subseteq O$ be a set of outputs associated with safe states, let $O_t\subseteq O_s$ be a set of outputs associated with target states. 
We consider the synthesis problem that consists in determining a controller steering the output of the system to $O_t$ while keeping the output in $O_s$ along the way. In addition, in order to choose among the possible controllers, we try to minimize the time to reach the target. Thus, we consider an optimal control problem.
In this section, we assume for simplicity, that $T$ is non-blocking; it would actually be sufficient to assume that all the states of $T$ associated to observations in $O_s$ are non-blocking.

\begin{definition}
Let $\St$ be a controller for $T$ such that for all $q\in Q$, $\St(q) \ne \emptyset$.
The {\it entry time of $T_\St$ from $q_0\in Q$ for reachability specification $(O_s,O_t)$} 
is the smallest $N\in \N$ such that for all trajectories of the controlled system $T_\St$, of length $N$ and starting from $q_0$,
$(q_0,l_0),(q_1,l_1),\dots,(q_{N-1},l_{N-1}),q_N$, there exists $K\in \{0,\dots,N\}$ such that
for all $k \in \{0,\dots,K\},\; H(q_k) \in O_s$ and $ H(q_K)\in O_t$.
The entry time is denoted by $J(T_\St,O_s,O_t,q_0)$.
If such a $N\in \N$ does not exist, then we define $J(T_\St,O_s,O_t,q_0)= + \infty$.
\end{definition}

The condition that $\St(q)\ne \emptyset$, for all $q\in Q$, ensures that the controlled system $T_\St$ is non-blocking.
The states from which the system is guaranteed to reach $O_t$ without leaving $O_s$ are the states with finite entry-time.
The following result is quite standard (see e.g.~\cite{bertsekas2000}) and is therefore stated without proof:
\begin{lemma}
\label{lem:reachability}
The {\it entry time} of $T_\St$ for reachability specification $(O_s,O_t)$ satisfies:
\begin{itemize}
\item For all $q\in Q\setminus H^{-1}(O_s), \; J(T_\St,O_s,O_t,q) =+\infty$,
\item For all $q\in H^{-1}(O_t), \; J(T_\St,O_s,O_t,q) =0$,
\item For all $q\in H^{-1}(O_s)\setminus H^{-1}(O_t)$,
\begin{equation}
\label{eq:rec1}
\hspace{-0.7cm}
J(T_\St,O_s,O_t,q) = 1+ \max_{q'\in \delta(q,\St(q))} J(T_\St,O_s,O_t,q'). 
\end{equation}
\end{itemize}
\end{lemma}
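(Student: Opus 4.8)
The plan is to reduce everything to the definition of the entry time as the smallest horizon $N$ guaranteeing that every length-$N$ trajectory of $T_\St$ reaches $O_t$ (while staying in $O_s$) by some step $K\le N$. The first two items I would dispatch directly from the definition. For $q\notin H^{-1}(O_s)$, every trajectory has $H(q_0)=H(q)\notin O_s$, so the safety requirement $H(q_k)\in O_s$ for all $k\le K$ already fails at $k=0$ for every choice of $K$; hence no horizon $N$ works and $J(T_\St,O_s,O_t,q)=+\infty$. For $q\in H^{-1}(O_t)$, the single length-$0$ trajectory $q_0=q$ satisfies the condition with $K=0$ (using $O_t\subseteq O_s$), so $N=0$ is admissible and, being minimal, gives $J(T_\St,O_s,O_t,q)=0$.

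The core of the argument is the recursion in the third item, for $q\in H^{-1}(O_s)\setminus H^{-1}(O_t)$. The key structural fact I would isolate first is that the defining property is \emph{monotone} in $N$: if every length-$N$ trajectory from $q$ reaches the target in time $\le N$, the same holds for every length-$(N+1)$ trajectory, since truncation to the length-$N$ prefix is again a trajectory of $T_\St$. This makes $J(T_\St,O_s,O_t,q)$ a genuine threshold and lets me argue via two inequalities. Writing $M=\max_{q'\in\delta(q,\St(q))}J(T_\St,O_s,O_t,q')$ (the maximum is over a nonempty set because $\St(q)\neq\emptyset$ and $\St$ is well-defined, so $\delta(q,\St(q))\neq\emptyset$), I would establish $J(T_\St,O_s,O_t,q)\le 1+M$ and $J(T_\St,O_s,O_t,q)\ge 1+M$ separately, with the convention $1+(+\infty)=+\infty$.

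For the upper bound I would take an arbitrary length-$(1+M)$ trajectory from $q$, peel off the first transition $(q,l_0)$ with $q_1\in\delta(q,l_0)$, and use that $J(T_\St,O_s,O_t,q_1)\le M$ (as $q_1\in\delta(q,\St(q))$) together with monotonicity on its length-$M$ suffix: this yields an index $K'\le M$ witnessing arrival in $O_t$ from $q_1$, and shifting by one, together with $H(q)\in O_s$, produces a witness $K=1+K'$ for the full trajectory. For the lower bound I would exhibit a single length-$M$ trajectory from $q$ that fails: pick $q^\ast$ attaining the maximum, take (by minimality of $J(T_\St,O_s,O_t,q^\ast)=M$) a length-$(M-1)$ trajectory from $q^\ast$ that does not reach the target, and prepend a transition into $q^\ast$; since $H(q)\notin O_t$ forces any witnessing $K$ to be $\ge 1$, the failure of the suffix propagates to the whole trajectory. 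This shows $J(T_\St,O_s,O_t,q)>M$, i.e. $J(T_\St,O_s,O_t,q)\ge 1+M$.

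The hard part is organizational rather than deep: I must keep the bookkeeping of the index shift $K\leftrightarrow K-1$ correct and treat the boundary cases cleanly---$M=0$ (where the lower bound reduces to the observation that the trivial trajectory $q$ itself fails because $H(q)\notin O_t$) and $M=+\infty$ (where I repeat the prepending construction for every horizon $N$ to conclude $J(T_\St,O_s,O_t,q)=+\infty$). The only genuine subtlety is ensuring that the prefix/suffix decomposition of trajectories of the controlled system $T_\St$ is respected, which is exactly what the monotonicity observation and the first-step peeling encapsulate.
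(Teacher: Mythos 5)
Your proof is correct. Note that the paper itself gives no argument for this lemma---it is stated without proof as a standard fact with a reference to the dynamic-programming literature---so there is nothing to diverge from; your two-inequality argument (monotonicity of the defining property in the horizon $N$, first-step peeling for the upper bound, prepending a failing suffix for the lower bound, with the boundary cases $M=0$ and $M=+\infty$ handled separately) is exactly the standard proof the author is alluding to, and it correctly accounts for the one real subtlety, namely that $H(q)\notin O_t$ forces any witnessing index $K$ to be at least $1$ so that failure of the suffix propagates.
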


We can now define the notion of time-optimal controller:
\begin{definition}
\label{def:opt}
We say that a controller $\St^*$ for $T$ is {\it time-optimal for reachability specification $(O_s,O_t)$}  if for all controllers $\St$, for all $q\in Q$,
$J(T_{\St^*},O_s,O_t,q) \le J(T_\St,O_s,O_t,q)$.
The {\it time-optimal value function for reachability specification $(O_s,O_t)$} is defined as
$J^*(T,O_s,O_t,q)=J(T_{\St^*},O_s,O_t,q).
$
\end{definition}

Solving the time-optimal control problem consists in synthesizing a time-optimal controller.
It is well known that a time-optimal controller exists (but may be not unique) and can be computed using dynamic programming~\cite{bertsekas2000,tabuada2009}.
The dynamic programming algorithm is guaranteed to terminate in a finite number of steps provided $H^{-1}(O_s)\subseteq Q$ is a finite set
which is often the case for discrete systems. Here again, for other systems, there is no guarantee that the algorithm will terminate and an abstraction-based approach
may help to compute a sub-optimal controller with an estimation of the distance to optimality.

\subsection{Abstraction-based Controller Synthesis}
\label{sec:reach}

Let $T_i=(Q_i,L,\delta_i,O,H_i)$, $i=1,2$, be metric transition systems such that $T_1\sim_\varepsilon T_2$. 
Let $T_1$ be the system that we want to control and $T_2$ be an approximately bisimilar abstraction of $T_1$.
We present an approach to controller synthesis for reachability specifications.
We first synthesize a controller $\St_{2,C_\varepsilon}$ for the abstraction $T_2$ and the reachability specification given by the contracted safe set $C_\varepsilon(O_s)$ and target set $C_\varepsilon(O_t)$.  If $T_2$ is discrete, this can be done using dynamic programming. Then, we design a controller for $T_1$ and reachability specification
$(O_s,O_t)$ using the following concretization procedure:

\begin{theorem}
\label{th:stat2}
Let $T_1 \sim_\varepsilon T_2$, let  $R\subseteq Q_1 \times Q_2$
denote the $\varepsilon$-approximate bisimulation relation between $T_1$ and $T_2$.
Let $\St_{2,C_\varepsilon}$ be a controller for $T_2$,
let us define $\St_{1}$, the controller for $T_1$
given by\footnote{If there are several states $q_2\in R(q_1)$ minimizing 
$J(T_{2,\St_{2,C_\varepsilon}},C_{\varepsilon}(O_s),C_{\varepsilon}(O_t),q_2)$ then we just pick one of them.}
\begin{equation}
\label{eq:ref3}
\forall q_1\in Q_1,\; \St_{1}(q_1) = \St_{2,C_\varepsilon}\left(\argmin_{q_2\in R(q_1)} J(T_{2,\St_{2,C_\varepsilon}},C_{\varepsilon}(O_s),C_{\varepsilon}(O_t),q_2)\right)
\end{equation}
where $q_2 \in R(q_1)$ stands for $(q_1,q_2)\in R$. Then, $\St_{1}$ is well-defined and for all $q_1 \in Q_1$:
\begin{equation}
\label{eq:ref4}
J(T_{1,{{\St}_1}},O_s,O_t,q_1) \le \min_{q_2\in R(q_1)} J(T_{2,\St_{2,C_\varepsilon}},C_{\varepsilon}(O_s),C_{\varepsilon}(O_t),q_2).
\end{equation}
\end{theorem}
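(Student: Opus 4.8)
The plan is to mirror the two-part structure of the proof of Theorem~\ref{th:stat}: first check that $\St_1$ is well-defined, then establish the entry-time bound (\ref{eq:ref4}). Throughout I would abbreviate $J_2(q_2)=J(T_{2,\St_{2,C_\varepsilon}},C_\varepsilon(O_s),C_\varepsilon(O_t),q_2)$ and $J_1(q_1)=J(T_{1,\St_1},O_s,O_t,q_1)$, and for each $q_1$ I write $q_2^\star$ for the minimizer selected in (\ref{eq:ref3}), so that $\St_1(q_1)=\St_{2,C_\varepsilon}(q_2^\star)$ and $\min_{q_2\in R(q_1)}J_2(q_2)=J_2(q_2^\star)$. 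Note that $R(q_1)\ne\emptyset$ because $T_1\sim_\varepsilon T_2$, so the minimum is taken over a nonempty set; I will also use (as is implicit whenever the entry time $J_2$ is defined) that $\St_{2,C_\varepsilon}$ is well-defined.

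Well-definedness of $\St_1$ then follows almost verbatim from the argument in Theorem~\ref{th:stat}: for $l\in\St_1(q_1)=\St_{2,C_\varepsilon}(q_2^\star)$, well-definedness of $\St_{2,C_\varepsilon}$ gives $l\in\textnormal{Enab}_2(q_2^\star)$, and the third item of Definition~\ref{def:bisim} applied to $(q_1,q_2^\star)\in R$ yields some $q_1'\in\delta_1(q_1,l)$, so $l\in\textnormal{Enab}_1(q_1)$.

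For (\ref{eq:ref4}) I would prove $J_1(q_1)\le J_2(q_2^\star)$ for every $q_1$ by strong induction on $N:=J_2(q_2^\star)$, the case $N=+\infty$ being vacuous. The base and inductive cases are driven by the recursive characterization of entry time in Lemma~\ref{lem:reachability} together with the defining property of the contraction. If $N=0$ then $H_2(q_2^\star)\in C_\varepsilon(O_t)$, and since $d(H_1(q_1),H_2(q_2^\star))\le\varepsilon$ the definition of $C_\varepsilon$ forces $H_1(q_1)\in O_t$, hence $J_1(q_1)=0$. If $1\le N<\infty$ then Lemma~\ref{lem:reachability} gives $H_2(q_2^\star)\in C_\varepsilon(O_s)\setminus C_\varepsilon(O_t)$, so again $H_1(q_1)\in O_s$; if moreover $H_1(q_1)\in O_t$ we are done, and otherwise $J_1(q_1)=1+\max_{q_1'\in\delta_1(q_1,\St_1(q_1))}J_1(q_1')$. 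To bound a successor $q_1'\in\delta_1(q_1,l)$ with $l\in\St_{2,C_\varepsilon}(q_2^\star)$, the second item of Definition~\ref{def:bisim} produces $q_2'\in\delta_2(q_2^\star,l)\subseteq\delta_2(q_2^\star,\St_{2,C_\varepsilon}(q_2^\star))$ with $(q_1',q_2')\in R$; the recursion for $J_2$ gives $J_2(q_2')\le N-1$, and since $q_2'\in R(q_1')$ we get $\min_{\tilde q_2\in R(q_1')}J_2(\tilde q_2)\le N-1<N$, so the induction hypothesis applies to $q_1'$ and gives $J_1(q_1')\le N-1$. Taking the maximum over successors yields $J_1(q_1)\le N$.

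The step I expect to be the crux — and the one where the specific form of the concretization (\ref{eq:ref3}) is essential — is this successor matching. Because $\St_1(q_1)$ is set equal to the action set of the single minimizing state $q_2^\star$ rather than to a union over all of $R(q_1)$ (as in the safety construction (\ref{eq:ref})), every $T_1$-successor of $q_1$ can be mirrored by a successor of $q_2^\star$ in $T_{2,\St_{2,C_\varepsilon}}$ whose entry time is strictly smaller than $N$; this strict decrease is exactly what makes the induction well-founded and what prevents the bound from drifting. I would take care to confirm that the minimization in (\ref{eq:ref3}) is over the nonempty set $R(q_1)$ and that $\St_{2,C_\varepsilon}$ is well-defined, since both the forward bisimulation clause and the entry-time recursion are invoked at $q_2^\star$.
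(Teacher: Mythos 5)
Your proposal is correct and follows essentially the same route as the paper's proof: well-definedness is inherited verbatim from Theorem~\ref{th:stat}, and the bound (\ref{eq:ref4}) is established by induction on $\tilde J(q_1)=\min_{q_2\in R(q_1)}J(T_{2,\St_{2,C_\varepsilon}},C_\varepsilon(O_s),C_\varepsilon(O_t),q_2)$, using the contraction $C_\varepsilon$ to transfer membership in $O_t$ and $O_s$ across the relation and the recursion of Lemma~\ref{lem:reachability} to get the strict decrease at matched successors. Your identification of the single-minimizer concretization as the step that makes the induction go through matches the role it plays in the paper's argument.
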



\begin{proof} The fact that $\St_{1}$ is well-defined can be shown similarly to the proof of Theorem~\ref{th:stat}.
%
Let us prove (\ref{eq:ref4}), 
we denote for all $q_1\in Q_1$,
$\tilde J(q_1) = \min_{q_2\in R(q_1)} J(T_{2,\St_{2,C_\varepsilon}},C_{\varepsilon}(O_s),C_{\varepsilon}(O_t),q_2)$.
If $\tilde J(q_1)=0$, then it means that there exists $q_2\in R(q_1)$ such that $J(T_{2,\St_{2,C_\varepsilon}},C_{\varepsilon}(O_s),C_{\varepsilon}(O_t),q_2)=0$.
This implies that $H_2(q_2)\in C_{\varepsilon}(O_t)$. Since $(q_1,q_2) \in R$, Definition~\ref{def:bisim} gives that $d(H_1(q_1),H_2(q_2)) \le \varepsilon$. Hence $H_1(q_1)\in O_t$ and $J(T_{1,{{\St}_1}},O_s,O_t,q_1)=0$. Thus, if $\tilde J(q_1)=0$,~(\ref{eq:ref4})  holds.
We now proceed by induction, let us assume that there exists $k\in \N$ such that for all $q_1 \in Q_1$ such that 
$\tilde  J(q_1)\le k$, (\ref{eq:ref4})  holds. Let $q_1\in Q_1$ such that 
$\tilde J(q_1)=k+1$, let $q_2\in Q_2$ be given by
$
q_2=\argmin_{p_2\in R(q_1)} J(T_{2,\St_{2,C_\varepsilon}},C_{\varepsilon}(O_s),C_{\varepsilon}(O_t),p_2),
$
then, $J(T_{2,\St_{2,C_\varepsilon}},C_{\varepsilon}(O_s),C_{\varepsilon}(O_t),q_2)=k+1$.
This implies $H_2(q_2)\in C_{\varepsilon}(O_s)\setminus C_{\varepsilon}(O_t)$. Since $(q_1,q_2) \in R$,  $d(H_1(q_1),H_2(q_2)) \le \varepsilon$ and $H_1(q_1)\in O_s$. If $H_1(q_1)\in O_t$ then 
$J(T_{1,{{\St}_1}},O_s,O_t,q_1)=0$ and the induction step is completed. 
Hence, let us assume $H_1(q_1)\notin O_t$. 
By~(\ref{eq:ref3}), ${\St}_1(q_1)=\St_{2,C_\varepsilon}(q_2)$. 
Let $l\in {\St}_1(q_1)$, let $q_1'\in \delta_1(q_1,l)$, since $(q_1,q_2)\in R$, there exists, by Definition~\ref{def:bisim}, $q_2' \in \delta_2(q_2,l)$ such that $(q_1',q_2')\in R$. It follows that
$
\tilde J(q_1')\le J(T_{2,\St_{2,C_\varepsilon}},C_{\varepsilon}(O_s),C_{\varepsilon}(O_t),q_2')
$.
Since $H_2(q_2)\in C_{\varepsilon}(O_s)\setminus C_{\varepsilon}(O_t)$, we have by (\ref{eq:rec1}),
$ 
J(T_{2,\St_{2,C_\varepsilon}},C_{\varepsilon}(O_s),C_{\varepsilon}(O_t),q_2')  \le
J(T_{2,\St_{2,C_\varepsilon}},C_{\varepsilon}(O_s),C_{\varepsilon}(O_t),q_2)-1 = k.
$
Therefore,
$
\tilde J(q_1')\le k.
$
Then, by the induction hypothesis we get that 
$
J(T_{1,{{\St}_1}},O_s,O_t,q_1') \le \tilde J(q_1')\le k.
$
Since this holds for all $l\in {\St}_1(q_1)$ and all $q_1'\in \delta_1(q_1,l)$, 
and since $H_1(q_1) \in O_s\setminus O_t$,
we have by (\ref{eq:rec1}) that
$
J(T_{1,{{\St}_1}},O_s,O_t,q_1) \le k+1
$
which completes the induction step.
Thus, we have proved by induction that for all $q_1 \in Q_1$ such that 
$\tilde  J(q_1)\in \N$, (\ref{eq:ref4})  holds.
If $\tilde J(q_1) = +\infty$, then~(\ref{eq:ref4}) clearly holds as well.
\end{proof}

The previous theorem gives us a way by equation~(\ref{eq:ref3}) to concretize a controller for abstraction $T_2$ into a controller for $T_1$. Equation (\ref{eq:ref4}) provides guarantees on the performance of this controller. Particularly, let us remark that
the states of
$T_{1,\St_1}$ from which the control objective is achieved (i.e. the states with finite entry-time) are those
related through the approximate bisimulation relation $R$ to states of $T_{2,\St_{2,C_\varepsilon}}$ with finite entry-time. 
In addition,
if $\St_{2,C_\varepsilon}$ is the time-optimal controller for $T_2$ and reachability specification $(C_\varepsilon(O_s),C_\varepsilon(O_t))$, the following result gives estimates of the distance to optimality for the controller
$\St_1$.  

\begin{theorem}
\label{th:approxreach1} 
Let $\St_{2,C_\varepsilon}^*$, $\St_{2,E_\varepsilon}^*$ be time-optimal controllers for $T_2$ and specification $(C_\varepsilon(O_s),C_\varepsilon(O_t))$ and $(E_\varepsilon(O_s),E_\varepsilon(O_t))$ respectively.
Let $\St_{1}$ be the controller for $T_1$ obtained  from $\St_{2,C_\varepsilon}^*$
by the concretization equation (\ref{eq:ref3}). Let $\St_{1,E_{2\varepsilon}}$ be the controller for $T_1$ obtained 
from $\St_{2,E_{\varepsilon}}^*$ by
\begin{equation}
\label{eq:ref7}
\forall q_1\in Q_1,\; S_{1,E_{2\varepsilon}}(q_1) = \St_{2,E_\varepsilon}^* \left(\argmin_{q_2\in R(q_1)} J^*(T_{2},E_{\varepsilon}(O_s),E_{\varepsilon}(O_t),q_2)\right).
\end{equation}
Then, for all $q_1\in Q_1$,
\begin{eqnarray*}
J^*(T_{1} ,E_{2\varepsilon}(O_s),E_{2\varepsilon}(O_t),q_1) \le &
J(T_{1,{\St_{1,E_{2\varepsilon}}}} ,E_{2\varepsilon}(O_s),E_{2\varepsilon}(O_t),q_1) & \le \\
& J^*(T_{1},O_s,O_t,q_1) & \le \\
& J(T_{1,{{\St}_1}},O_s,O_t,q_1) & \le J^*(T_{1},C_{2\varepsilon}(O_s),C_{2\varepsilon}(O_t),q_1). 
\end{eqnarray*}
\end{theorem}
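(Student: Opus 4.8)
The plan is to read the claimed chain as four separate inequalities and dispatch the two inner ones first. The inequalities $J^*(T_1,E_{2\varepsilon}(O_s),E_{2\varepsilon}(O_t),q_1)\le J(T_{1,\St_{1,E_{2\varepsilon}}},E_{2\varepsilon}(O_s),E_{2\varepsilon}(O_t),q_1)$ and $J^*(T_1,O_s,O_t,q_1)\le J(T_{1,\St_1},O_s,O_t,q_1)$ are immediate from Definition~\ref{def:opt}, since in each the right-hand side is the entry time of a specific controller and the left-hand side is the optimal (minimal) value for the same specification. The content lies in the two outer inequalities. Before attacking them I would record one monotonicity fact: if $O_s\subseteq O_s'$ and $O_t\subseteq O_t'$ then $J(T_\St,O_s',O_t',q)\le J(T_\St,O_s,O_t,q)$ for every controller $\St$ and state $q$; this is immediate from the definition of entry time because any witness index $K$ for the smaller pair remains a witness for the larger pair, so any length working for $(O_s,O_t)$ also works for $(O_s',O_t')$.

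Next I would observe that the proof of Theorem~\ref{th:stat2} uses the bisimulation precision only through two implications: whenever $(q_1,q_2)\in R$, membership $H_2(q_2)\in C_\varepsilon(O_t)$ forces $H_1(q_1)\in O_t$, and $H_2(q_2)\in C_\varepsilon(O_s)$ forces $H_1(q_1)\in O_s$. Abstracting this, for any safe/target pairs $(A_s^{(2)},A_t^{(2)})$ on $T_2$ and $(A_s^{(1)},A_t^{(1)})$ on $T_1$ satisfying the analogous implications, the same induction yields: concretizing $\St_2$ via the minimizer of $J(T_{2,\St_2},A_s^{(2)},A_t^{(2)},\cdot)$ over $R(q_1)$ produces $\St_1$ with $J(T_{1,\St_1},A_s^{(1)},A_t^{(1)},q_1)\le \min_{q_2\in R(q_1)}J(T_{2,\St_2},A_s^{(2)},A_t^{(2)},q_2)$. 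For the expansion sets this principle applies with $A_s^{(2)}=E_\varepsilon(O_s)$, $A_t^{(2)}=E_\varepsilon(O_t)$ and $A_s^{(1)}=E_{2\varepsilon}(O_s)$, $A_t^{(1)}=E_{2\varepsilon}(O_t)$, because $(q_1,q_2)\in R$ with $H_2(q_2)\in E_\varepsilon(O_s)$ gives $d(H_1(q_1),o')\le 2\varepsilon$ for some $o'\in O_s$ by the triangle inequality, hence $H_1(q_1)\in E_{2\varepsilon}(O_s)$, and likewise for $O_t$. This is the ``$E$-version'' of Theorem~\ref{th:stat2} matching the concretization~(\ref{eq:ref7}).

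For the rightmost inequality I would apply Theorem~\ref{th:stat2} with $\St_{2,C_\varepsilon}=\St^*_{2,C_\varepsilon}$ to obtain $J(T_{1,\St_1},O_s,O_t,q_1)\le \min_{q_2\in R(q_1)}J^*(T_2,C_\varepsilon(O_s),C_\varepsilon(O_t),q_2)$, where optimality of $\St^*_{2,C_\varepsilon}$ turns the controlled entry time on the right into $J^*$. It then remains to bound this minimum by $J^*(T_1,C_{2\varepsilon}(O_s),C_{2\varepsilon}(O_t),q_1)$. Here I would reverse the roles of $T_1$ and $T_2$, using that $R^{-1}$ is an $\varepsilon$-approximate bisimulation, and let $\St^*_{1,C_{2\varepsilon}}$ be the time-optimal controller for $T_1$ and specification $(C_{2\varepsilon}(O_s),C_{2\varepsilon}(O_t))$. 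Concretizing $\St^*_{1,C_{2\varepsilon}}$ into a controller $\hat\St_2$ for $T_2$, guided by $J(T_{1,\St^*_{1,C_{2\varepsilon}}},C_\varepsilon(C_\varepsilon(O_s)),C_\varepsilon(C_\varepsilon(O_t)),\cdot)$, the abstract principle gives, for $q_1\in R^{-1}(q_2)$, the bound $J(T_{2,\hat\St_2},C_\varepsilon(O_s),C_\varepsilon(O_t),q_2)\le J(T_{1,\St^*_{1,C_{2\varepsilon}}},C_\varepsilon(C_\varepsilon(O_s)),C_\varepsilon(C_\varepsilon(O_t)),q_1)$. Optimality of $\St^*_{2,C_\varepsilon}$ dominates the left-hand side by $J^*(T_2,C_\varepsilon(O_s),C_\varepsilon(O_t),q_2)$, while $C_{2\varepsilon}(O)\subseteq C_\varepsilon(C_\varepsilon(O))$ from Lemma~\ref{lem:contract} together with monotonicity dominates the right-hand side by $J^*(T_1,C_{2\varepsilon}(O_s),C_{2\varepsilon}(O_t),q_1)$; taking the minimum over $q_2\in R(q_1)$ closes the inequality.

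The leftmost inequality is the mirror image. I would apply the $E$-version to $\St^*_{2,E_\varepsilon}$; since $\St^*_{2,E_\varepsilon}$ is time-optimal, its controlled entry time equals $J^*(T_2,E_\varepsilon(O_s),E_\varepsilon(O_t),\cdot)$, so the resulting concretization is exactly~(\ref{eq:ref7}) and yields $J(T_{1,\St_{1,E_{2\varepsilon}}},E_{2\varepsilon}(O_s),E_{2\varepsilon}(O_t),q_1)\le \min_{q_2\in R(q_1)}J^*(T_2,E_\varepsilon(O_s),E_\varepsilon(O_t),q_2)$. To bound this minimum by $J^*(T_1,O_s,O_t,q_1)$ I would again reverse the roles of $T_1$ and $T_2$, let $\St^*_1$ be the time-optimal controller for $T_1$ and specification $(O_s,O_t)$, and concretize it into $T_2$ guided by $J(T_{1,\St^*_1},C_\varepsilon(E_\varepsilon(O_s)),C_\varepsilon(E_\varepsilon(O_t)),\cdot)$; the inclusion $O\subseteq C_\varepsilon(E_\varepsilon(O))$ from Lemma~\ref{lem:contract} with monotonicity dominates the abstraction-level value by $J^*(T_1,O_s,O_t,q_1)$, while optimality of $\St^*_{2,E_\varepsilon}$ dominates $J^*(T_2,E_\varepsilon(O_s),E_\varepsilon(O_t),q_2)$. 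The step I expect to require the most care is isolating the abstract form of Theorem~\ref{th:stat2} and verifying that the $E$-version's guiding value function is precisely the one appearing in~(\ref{eq:ref7}): matching these guiding value functions across the two directions of concretization is where a careless argument would slip, since the expansion sets $E_\varepsilon(O_s),E_\varepsilon(O_t)$ are not literally contractions and so Theorem~\ref{th:stat2} cannot be invoked verbatim.
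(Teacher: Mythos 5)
Your proposal is correct and follows essentially the same route as the paper's proof: the two inner inequalities from the definition of the optimal value function, and the two outer ones obtained by reversing the roles of $T_1$ and $T_2$ in Theorem~\ref{th:stat2}, introducing auxiliary concretized controllers on $T_2$, and combining the inclusions of Lemma~\ref{lem:contract} with monotonicity of the entry time under enlargement of the safe and target sets. Your explicitly isolated ``abstract principle'' (that the induction in Theorem~\ref{th:stat2} only uses the two membership implications between related states) is a clean formalization of the step the paper handles informally with the phrase ``we can show similar to Theorem~\ref{th:stat2},'' and correctly addresses why equation~(\ref{eq:ref7}) cannot be obtained by a verbatim application of that theorem.
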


\begin{proof} The proof essentially follows the same line as the proof of Theorem~\ref{th:approxsafe1}.
The first and third inequalities are direct consequences of the definition of time-optimal value function. 
Let us prove the fourth inequality. 
Let $\St_{1,C_{2\varepsilon}}^*$ be the time-optimal controller for $T_1$ and reachability specification $(C_{2\varepsilon}(O_s),C_{2\varepsilon}(O_t))$. 
From Lemma~\ref{lem:contract}, we have   
$C_{2\varepsilon}(O_s) \subseteq C_{\varepsilon}(C_\varepsilon(O_s))$ and 
$C_{2\varepsilon}(O_t) \subseteq C_{\varepsilon}(C_\varepsilon(O_t))$.
Then, for all $q_1\in Q_1$,
$J^*(T_1,C_{2\varepsilon}(O_s),C_{2\varepsilon}(O_t),q_1) 
 \ge 
J(T_{1,\St^*_{1,C_{2\varepsilon}}},C_{\varepsilon}(C_\varepsilon(O_s)),C_{\varepsilon}(C_\varepsilon(O_t)),q_1)$.
%
Now let us define the controller for $T_2$,
$\St_{2,C_\varepsilon}$, 
given, for all $q_2\in Q_2$, by 
$$
\St_{2,C_\varepsilon}(q_2) = 
\St^*_{1,C_{2\varepsilon}}\left(\argmin_{q_1\in R^{-1}(q_2)} J(T_{1,\St^*_{1,C_{2\varepsilon}}},C_{\varepsilon}(C_\varepsilon(O_s)),C_{\varepsilon}(C_\varepsilon(O_t)),q_1)
\right).
$$
By reversing the role of $T_1$ and $T_2$ in Theorem~\ref{th:stat2}, it follows that for all $q_2\in Q_2$,
\begin{eqnarray*}
J(T_{2,{{\St}_{2,C_\varepsilon}}},C_\varepsilon(O_s),C_\varepsilon(O_t),q_2)  &\le& 
 \min_{q_1\in R^{-1}(q_2)} J(T_{1,\St^*_{1,C_{2\varepsilon}}},C_{\varepsilon}(C_\varepsilon(O_s)),C_{\varepsilon}(C_\varepsilon(O_t)),q_1) \\
&\le&
\min_{q_1\in R^{-1}(q_2)}
J^*(T_1,C_{2\varepsilon}(O_s),C_{2\varepsilon}(O_t),q_1).
\end{eqnarray*}
%
Moreover, for all $q_2\in Q_2$,
$
J(T_{2,{{\St}^*_{2,C_\varepsilon}}},C_\varepsilon(O_s),C_\varepsilon(O_t),q_2) \le 
J(T_{2,{{\St}_{2,C_\varepsilon}}},C_\varepsilon(O_s),C_\varepsilon(O_t),q_2)
$
which gives together with Theorem~\ref{th:stat2}, for all $q_1 \in Q_1$,
\begin{eqnarray*}
J(T_{1,{{\St}_1}},O_s,O_t,q_1) &\le &
\min_{q_2\in R(q_1)} J(T_{2,{{\St}_{2,C_\varepsilon}}},C_\varepsilon(O_s),C_\varepsilon(O_t),q_2) \\
& \le & \min_{q_2\in R(q_1)} \min_{p_1\in R^{-1}(q_2) } J^*(T_1,C_{2\varepsilon}(O_s),C_{2\varepsilon}(O_t),p_1) 
 \le  J^*(T_1,C_{2\varepsilon}(O_s),C_{2\varepsilon}(O_t),q_1).
\end{eqnarray*}
%
Let us now prove the second inequality.
From Lemma~\ref{lem:contract},    
$O_s \subseteq C_{\varepsilon}(E_\varepsilon(O_s))$ and 
$O_t \subseteq C_{\varepsilon}(E_\varepsilon(O_t))$.
Then, for all $q_1\in Q_1$,
$
J^*(T_1,O_s,O_t,q_1)  
\ge 
J(T_{1,\St^*_{1}}, C_{\varepsilon}(E_\varepsilon(O_s)),C_{\varepsilon}(E_\varepsilon(O_t)),q_1)$.
Now let us define $\St_{2,E_\varepsilon}$, the controller for $T_2$
given for all $q_2\in Q_2$ by 
$$
\St_{2,E_\varepsilon}(q_2) = {\St}_{1}^*\left(\argmin_{q_1\in R^{-1}(q_2)} J(T_{1,\St^*_{1}}, C_{\varepsilon}(E_\varepsilon(O_s)),C_{\varepsilon}(E_\varepsilon(O_t)),q_1) 
 \right).
$$
By reversing the role of $T_1$ and $T_2$ in Theorem~\ref{th:stat2}, it follows that for all $q_2\in Q_2$,
\begin{eqnarray*}
J(T_{2,\St_{2,E_\varepsilon}}, E_\varepsilon(O_s), E_\varepsilon(O_t),q_2)
 & \le &
\min_{q_1\in R^{-1}(q_2)} J(T_{1,\St^*_{1}}, C_{\varepsilon}(E_\varepsilon(O_s)),C_{\varepsilon}(E_\varepsilon(O_t)),q_1) \\
&\le &
\min_{q_1\in R^{-1}(q_2)}
J^*(T_1,O_s,O_t,q_1).
\end{eqnarray*}
Using the fact that from Lemma~\ref{lem:contract}, $E_\varepsilon(O_s) \subseteq C_{\varepsilon}(E_{2\varepsilon}(O_s))$ and $E_\varepsilon(O_t) \subseteq C_{\varepsilon}(E_{2\varepsilon}(O_t))$, 
we can show similar to Theorem~\ref{th:stat2} that for all $q_1\in Q_1$
\begin{eqnarray*}
J( T_{1,{\St_{1,E_{2\varepsilon}}}} ,E_{2\varepsilon}(O_s),E_{2\varepsilon}(O_t),q_1) & \le &
\min_{q_2\in R(q_1)} J^*(T_{2},E_{\varepsilon}(O_s),E_{\varepsilon}(O_t),q_2) \\
 & \le &
\min_{q_2\in R(q_1)} J(T_{2,\St_{2,E_\varepsilon}}, E_\varepsilon(O_s), E_\varepsilon(O_t),q_2) \\
& \le & \min_{q_2\in R(q_1)} \min_{p_1\in R^{-1}(q_2)}
J^*(T_1,O_s,O_t,p_1) \le J^*(T_1,O_s,O_t,q_1).
\end{eqnarray*}
\end{proof}

By computing the controllers $\St_1$ and $\St_{1,E_{2\varepsilon}}$ one is able to give a certified upper-bound on the distance to optimality of the controller $\St_1$ we will 
use to control $T_1$.
Moreover, if the reachability problem is robust (i.e. the time-optimal value function depends continuously on the specification); then 
$J^*(T_1,O_s,O_t,q_1)$ can be approximated arbitrarily close.
We would like to highlight the differences of our approach with~\cite{mazo2010}
where time-optimal reachability controllers are synthesized using discrete abstractions related by
alternating simulations. These are weaker assumptions than those considered in the present work.
In~\cite{mazo2010}, an approach to compute guaranteed upper and lower bounds of the value function
is given. However, contrarily to our approach there is no clue that these lower and upper bounds 
can approach the true time-optimal value function arbitrarily close. Also, in~\cite{mazo2010}, the 
controllers are refined via the natural concretization procedure that suffers from the drawbacks described in 
Section~\ref{sec:prel} whereas our approach does not.

\section{Application to Switching Controller Design}

In this section, we present an effective approach to switching controller design based on
the  synthesis approaches introduced in this paper in combination
with the approximately bisimilar discrete abstractions of switched systems developed in~\cite{B-GirPolTab08}.



\begin{definition}
\label{def:switch} A {\it switched system} is a triple
$
\Sigma=(\R^n,P,F),
$
where $\R^n$ is the state space;
$P =\{1,\dots,m\}$ is the finite set of modes;
$F=\{f_{1},\dots,f_{m}\}$ is a collection of vector
fields indexed by $P$. 
\end{definition}


Given a switched system $\Sigma=(\R^n,P,\mathcal P, F)$ and a parameter $\tau >0$,
we define a transition system $T_\tau(\Sigma)$ that describes trajectories of duration $\tau$ of $\Sigma$.
This can be seen as a time sampling process.
This is natural when the switching in $\Sigma$ is determined by a time-triggered controller with period $\tau$.
Formally, 
$T_{\tau}(\Sigma)=(Q_1,L,\delta_1,O,H_1)$
where the set of states is $Q_1=\R^n$; the set of actions is $L=P$;
the transition relation is given by 
$x'\in \delta_1(x,p)$ if and only if the solution of the differential equation $\dot{\mathbf{x}}(t)=f_p(\mathbf{x}(t))$ with 
$\mathbf{x}(0)=x$ satisfies $\mathbf{x}(\tau)=x'$;
the set of outputs is $O=\R^n$;
the observation map $H_1$ is the identity map over $\R^n$.
The set of
observations $O=\R^n$ is equipped with the metric $d(x,x')=\|x-x'\|$ where $\|.\|$ is the usual Euclidean norm.
The existence of approximately bisimilar discrete abstractions of $T_\tau(\Sigma)$ is related to the notion of incremental stability~\cite{angeli2002}. 
Under this assumption, it is possible to compute approximately bisimilar discrete abstractions of arbitrary precision for $T_\tau(\Sigma)$ based on a griding of the state-space. Moreover, the approximate bisimulation relation can be fully characterized by a Lyapunov function proving incremental stability of switched system $\Sigma$ (see~\cite{B-GirPolTab08}
for details).

Controller synthesis for switched systems with safety or reachability specifications can be tackled
by direct application of fixed-point computation or dynamic programming using guaranteed over-approximations~\cite{Asarin2000}
or convergent approximations~\cite{Mitchell2000} of reachable sets. In the first case, the synthesized controllers are ``correct by design'' but there is no guarantee that the synthesis algorithm will terminate. In the second case, we can only prove
that the synthesized controllers are ``correct in the limit'' in the sense that correct controllers can be approximated
arbitrarily close. The approach described in this paper does not have these problems but only applies to incrementally
stable systems.
For illustration purpose, we apply our approach to  
a boost DC-DC converter. 
It is a switched system with modes,
the two dimensional dynamics associated with both modes are affine of the form
$
\dot x(t)=A_p x(t) +b$ for $p=1,2$ (see~\cite{B-GirPolTab08} for numerical values).
It can be shown that it is incrementally stable and thus approximately bisimilar discrete abstractions
can be computed.

We first consider the problem of regulating the state of the DC-DC converter around a desired nominal state. 
This can be done for instance by synthesizing a controller that keeps the state of the switched system in a set centered around the nominal state.
This is a safety specification. In the following, we consider the specification given by the set $O_s= [1.1,1.6] \times [5.4,5.9]$. We use a time sampling parameter $\tau=1$ and choose to work with a discrete abstraction that is approximately bisimilar to $T_\tau(\Sigma)$ with precision $\varepsilon=0.05$.
We compute a safety controller for the switched system $T_\tau(\Sigma)$ by the approach described in Section~\ref{sec:safe}.
The discrete abstraction has a finite number of states inside $H_2^{-1}(C_\varepsilon(O_s))$
(actually $169744$).
The fixed point algorithm for the synthesis of the maximal safety controller for the abstraction and specification $C_\varepsilon(O_s)$ terminates in $2$ iterations. 
The resulting safety controller $\St_1$ for the switched system  $T_\tau(\Sigma)$ and the specification $O_s$
is shown on the left part of Figure~\ref{fig:ex1b} where we have represented a trajectory of the system where the switching is controlled using a lazy implementation of the controller $\St_1$:
when the controller has the choice between mode 1 and 2, it keeps the current mode active. We can check that the specification is effectively met.
We also compute the upper-bound of the maximal safety controller $\St_1^*$ for switched system $T_\tau(\Sigma)$ and specification $O_s$, given by Theorem~\ref{th:approxsafe1}.
The abstraction has $383161$ states inside $H_2^{-1}(E_\varepsilon(O_s))$ and
the fixed point algorithm for computing the maximal safety controller terminates in $4$ iterations.
The resulting controller $\St_{1,E_{2\varepsilon}}$, shown on the right side  of Figure~\ref{fig:ex1b},
is an upper-bound of the maximal safety controller $\St_1^*$ for $T_\tau(\Sigma)$ and specification $O_s$.

\begin{figure}[!t]
\begin{center}
\includegraphics[angle=0,scale=0.5]{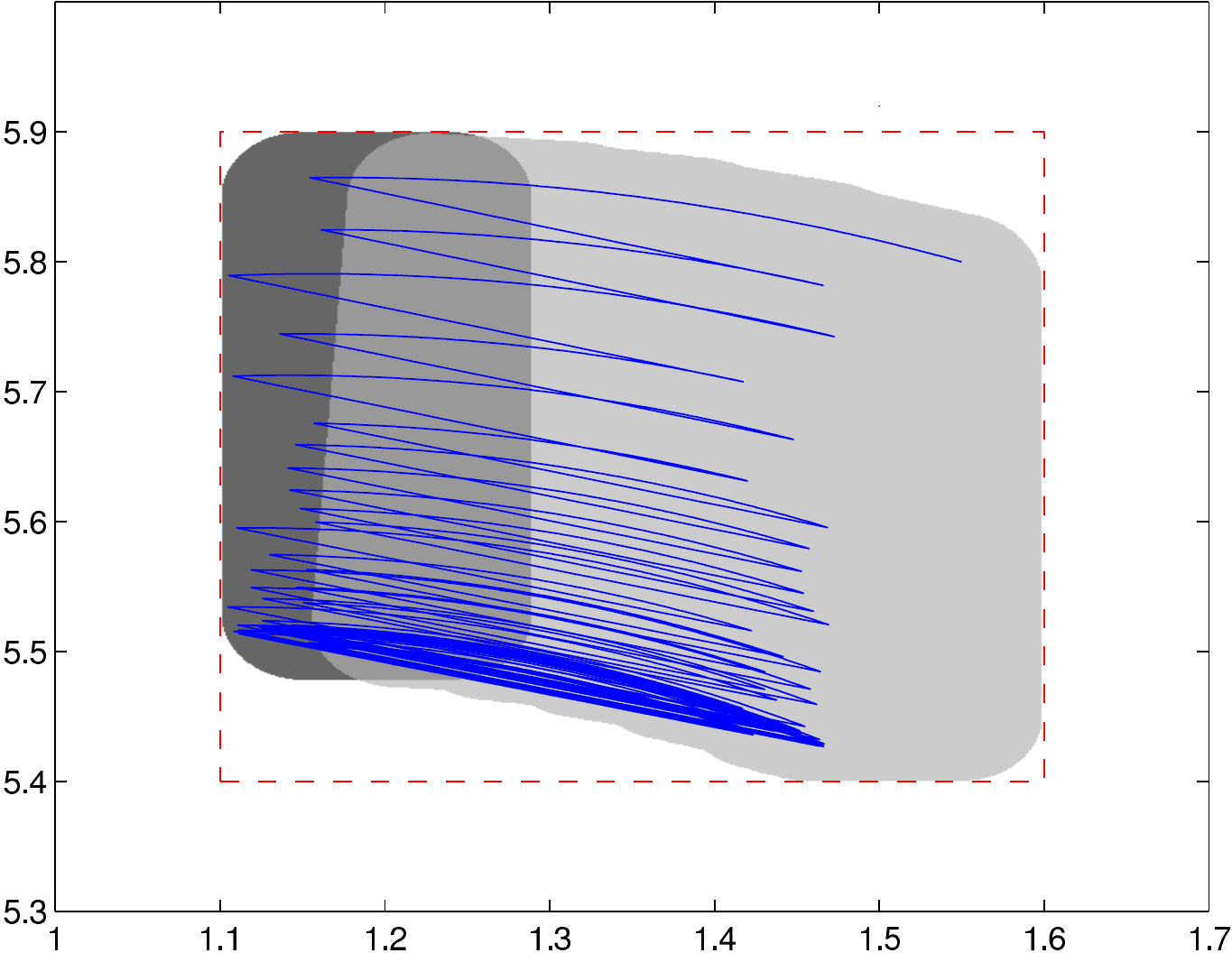}
\includegraphics[angle=0,scale=0.5]{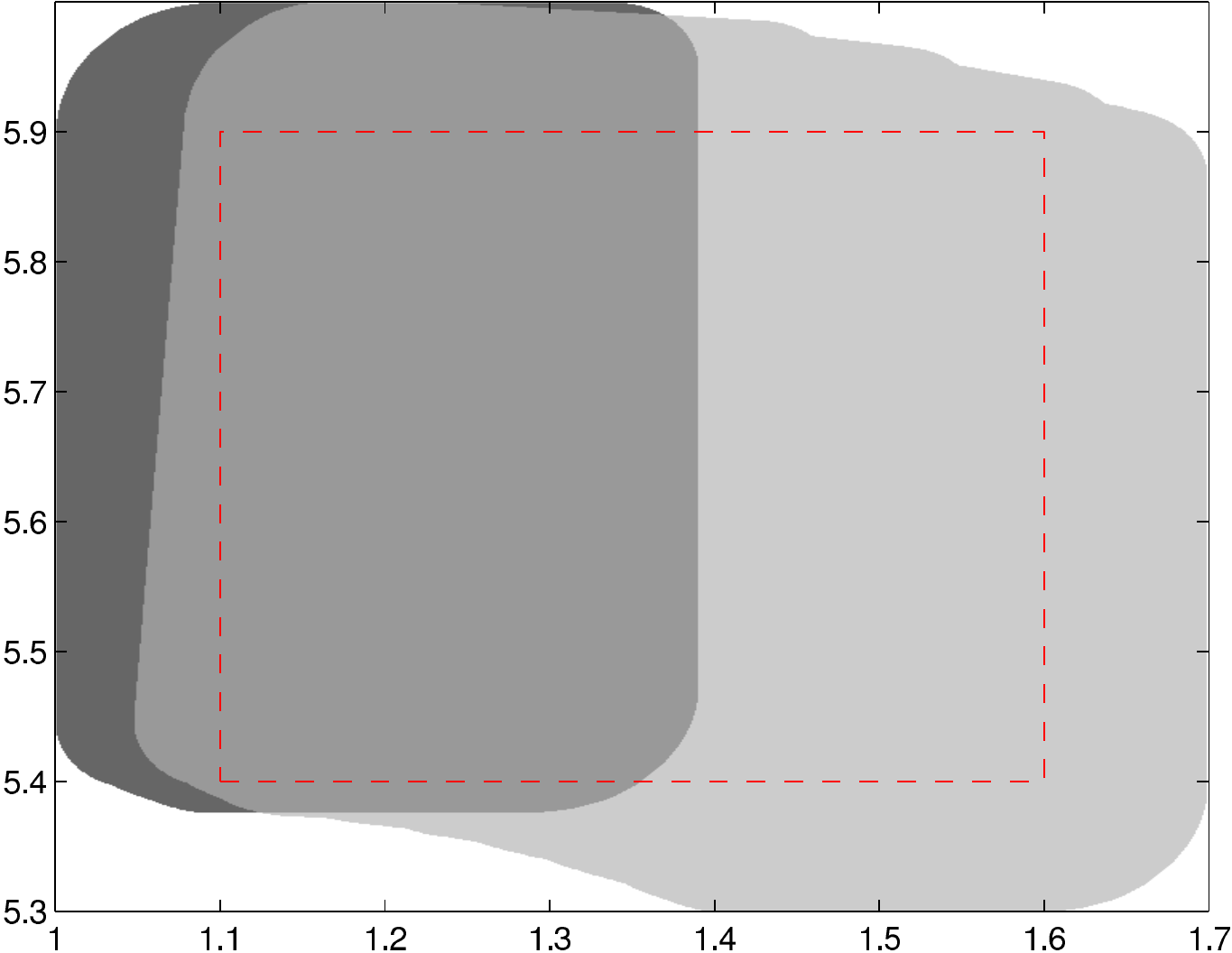}
\caption{Safety controller $\St_1$ for the switched system $T_\tau(\Sigma)$ and specification $O_s$ with controlled trajectory (left);
Safety controller $\St_{1,E_{2\varepsilon}}$ for the switched system $T_\tau(\Sigma)$ and specification $E_{2\varepsilon}(O_s)$ (right); dark gray: mode 1, light gray: mode 2, medium gray: both modes are acceptable, white: no action is allowed.
The maximal safety controller $\St_1^*$ for $T_\tau(\Sigma)$ and specification $O_s$ satisfies $\St_1\preceq \St_1^* \preceq \St_{1,E_{2\varepsilon}}$.
}
\label{fig:ex1b}
\end{center}
\end{figure}

We now consider the problem of steering in minimal time the state of the DC-DC converter
in the desired region of operation while respecting some safety constraints. This is a time-optimal control problem.
We consider the specification given by the safe set $O_s= [0.65,1.65] \times [4.95,5.95]$ and the target set 
$O_t = [1.1,1.6] \times [5.4,5.9]$. 
This time, we use a time sampling parameter $\tau=0.5$ and
 choose to work with a discrete abstraction that is approximately bisimilar to $T_\tau(\Sigma)$ with precision $\varepsilon=0.1$.
We compute a suboptimal reachability controller for the switched system $T_\tau(\Sigma)$ by the approach described in Section~\ref{sec:reach}.
The discrete abstraction  has a finite number of states inside $H_2^{-1}(C_\varepsilon(O_s))$
(actually $674041$).
The dynamic programming algorithm for the synthesis of the time-optimal controller for 
the abstraction and reachability specification $(C_\varepsilon(O_s),C_\varepsilon(O_t))$ terminates in $94$ iterations.
The resulting suboptimal controller $\St_1$ for the switched system $T_\tau(\Sigma)$ for the reachability specification $(O_s,O_t)$
is shown on Figure~\ref{fig:ex2b} where we have also represented trajectories of the system where the switching is controlled using the synthesized
controller. 
We can check that the specification is effectively met. 
The entry time associated to $\St_1$, $J(T_{\tau}(\Sigma)_{\St_{1}},O_s,O_t,q_1)$ 
shown on the left part of Figure~\ref{fig:ex2c}, gives an upper-bound of the time-optimal value function.
We also compute the lower-bound of the time-optimal value function
given by Theorem~\ref{th:approxreach1}.
The abstraction has $1520289$ states inside $H_2^{-1}(E_\varepsilon(O_s))$ and
the fixed point algorithm for computing the maximal safety controller terminates in $66$ iterations.
The resulting controller $\St_{1,E_{2\varepsilon}}$ with entry time 
$J(T_\tau(\Sigma)_{\St_{1,E_{2\varepsilon}}},E_{2\varepsilon}(O_s),E_{2\varepsilon}(O_t),q_1)$
provides a lower-bound of the time-optimal value function $J^*(T_\tau(\Sigma),O_s,O_t,q_1)$.
This lower-bound is shown on the right side of Figure~\ref{fig:ex2c}.

\begin{figure}[!t]
\begin{center}
\includegraphics[angle=0,scale=0.5]{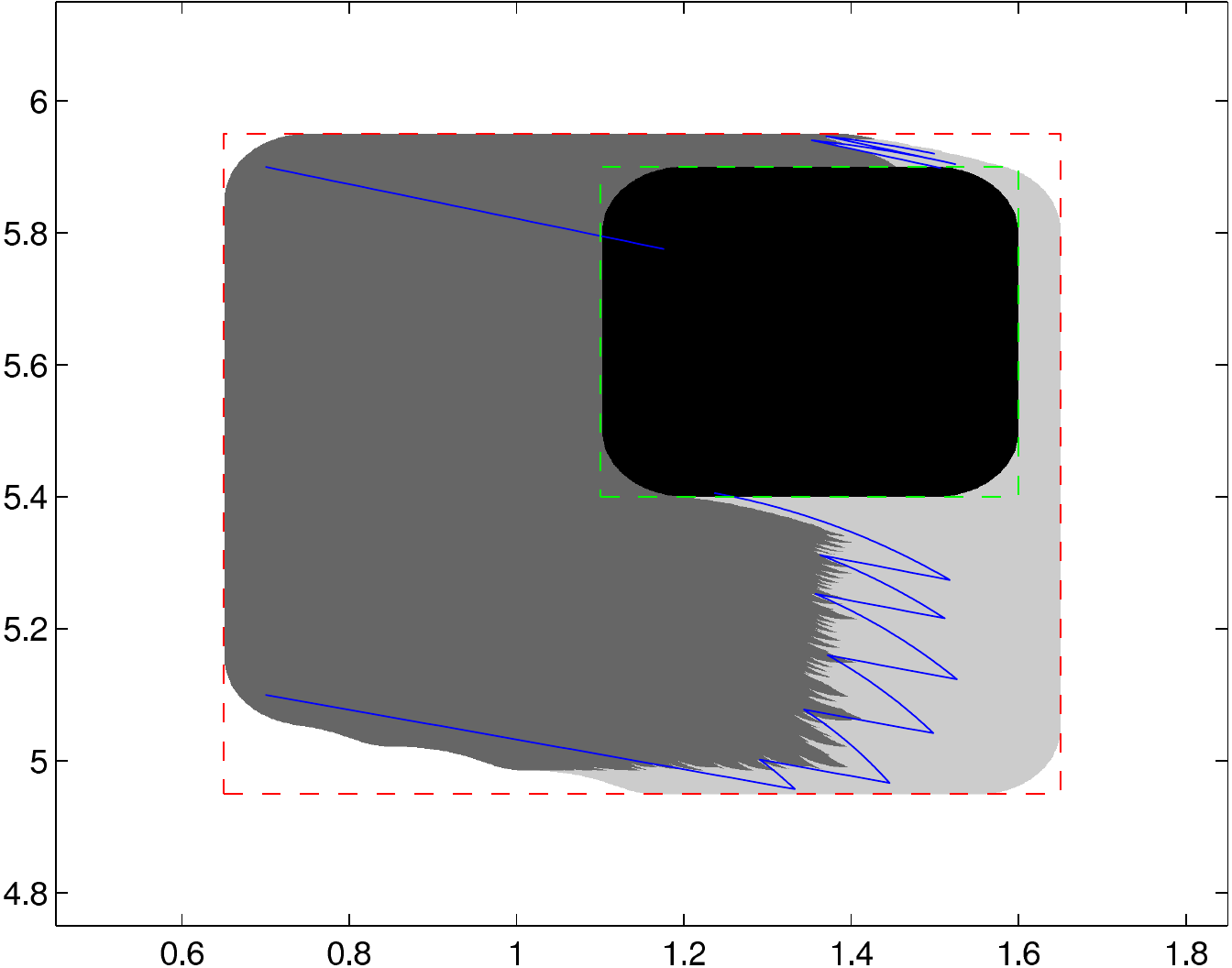}
\caption{Suboptimal controller $\St_1$ for the switched system $T_\tau(\Sigma)$ and reachability specification $(O_s,O_t)$  
and trajectories of the controlled switched system.}
\label{fig:ex2b}
\end{center}
\end{figure}

\begin{figure}[!t]
\begin{center}
\includegraphics[angle=0,scale=0.5]{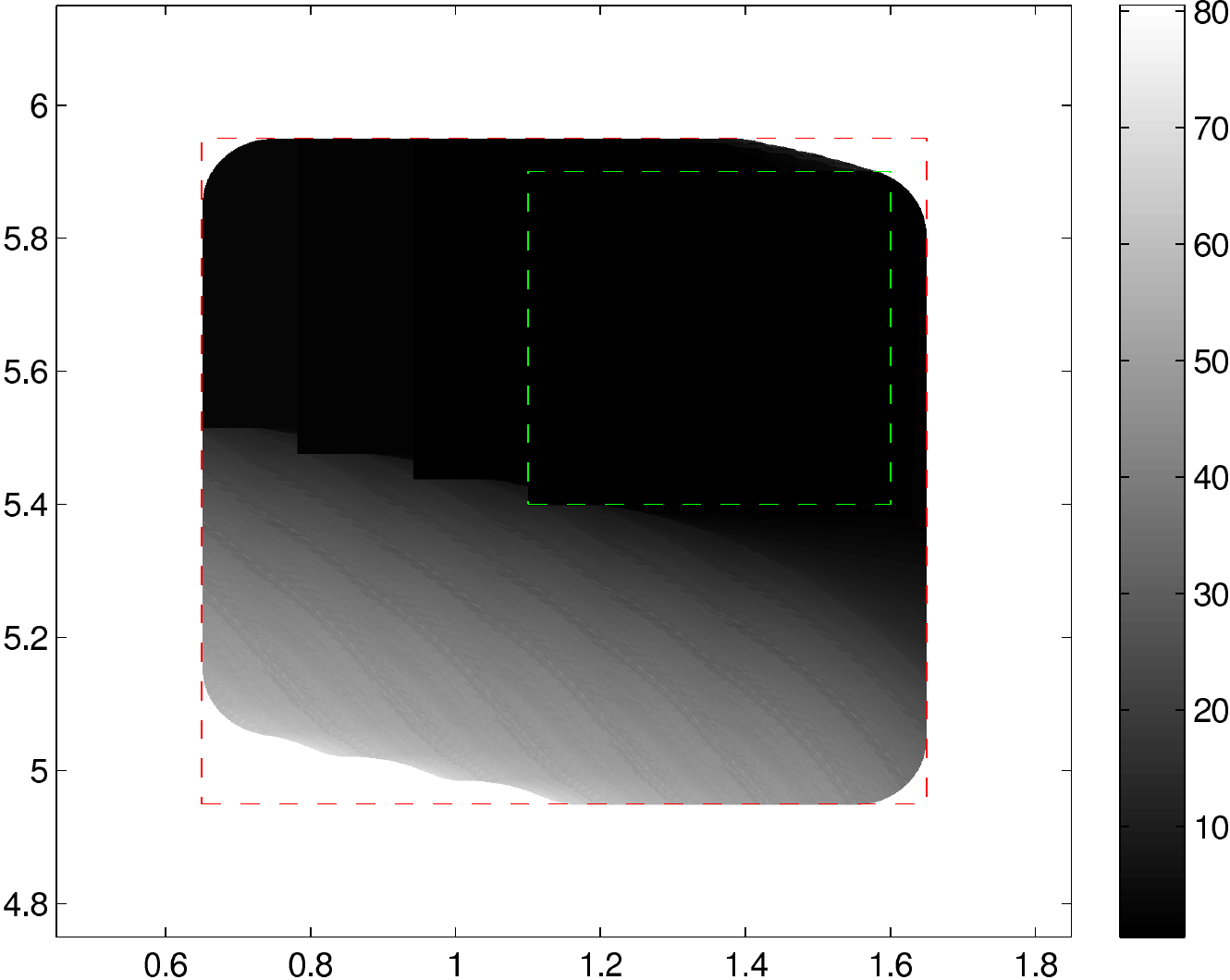}
\includegraphics[angle=0,scale=0.5]{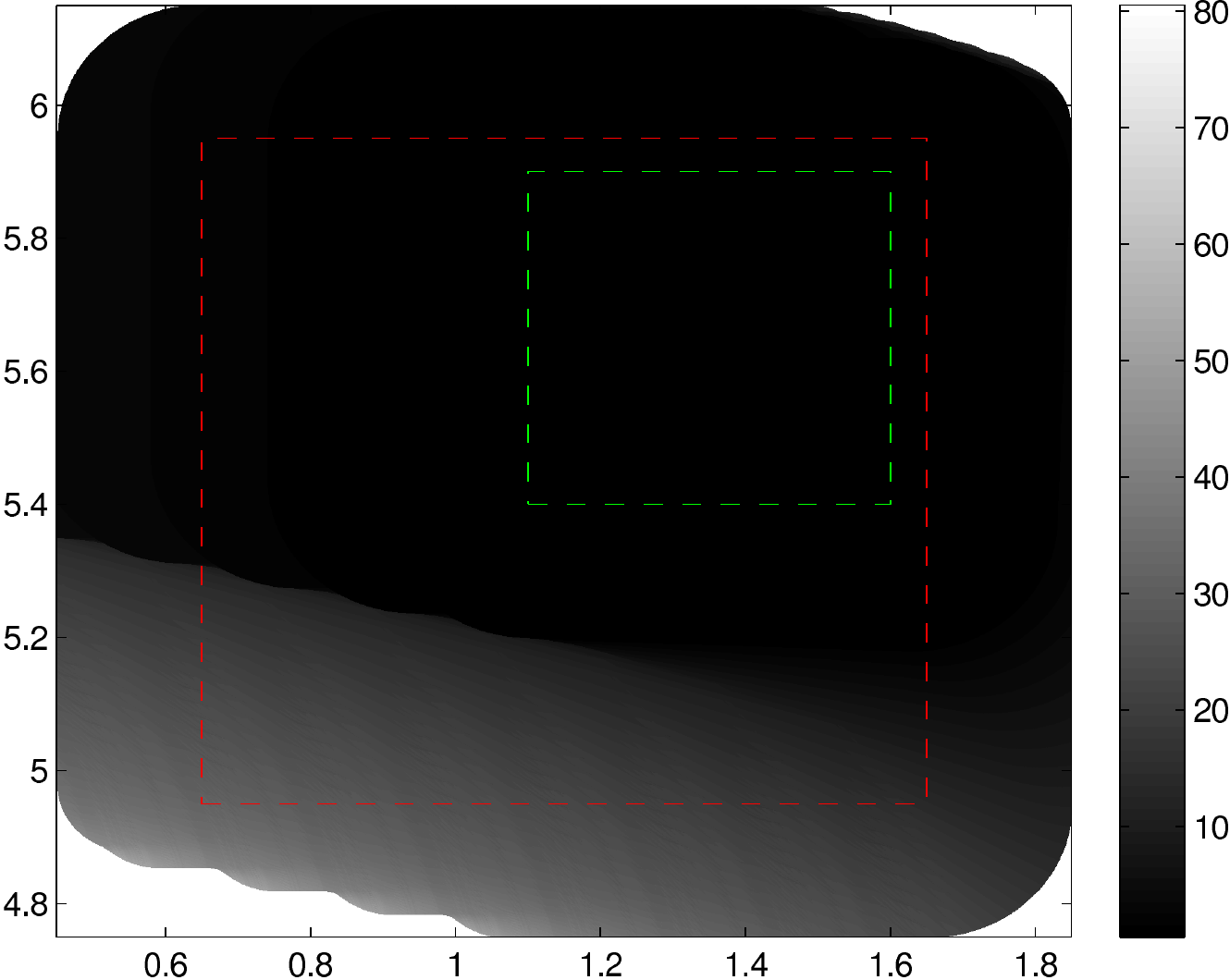}
\caption{Entry-time $J(T_\tau(\Sigma)_{\St_{1}},O_s,O_t,q_1)$ for the controller $\St_1$ shown in Figure~\ref{fig:ex2b} (left);
Entry-time $(T_\tau(\Sigma)_{\St_{1,E_{2\varepsilon}}},E_{2\varepsilon}(O_s),E_{2\varepsilon}(O_t),q_1)$ (right);
The time-optimal value function for the switched system $T_\tau(\Sigma)$ and reachability specification $(O_s,O_t)$ satisfies  $J(T_\tau(\Sigma)_{\St_{1,E_{2\varepsilon}}},E_{2\varepsilon}(O_s),E_{2\varepsilon}(O_t),q_1)\le J^*(T_\tau(\Sigma),O_s,O_t,q_1) \le
J(T_\tau(\Sigma)_{\St_{1}},O_s,O_t,q_1)$.}
\label{fig:ex2c}
\end{center}
\end{figure}

\section{Conclusion}

In this paper, we proposed a methodology, based on the use of approximately bisimilar discrete abstractions, for effective computation of controllers for safety and reachability specifications. 
We provided guarantees of performances of the resulting controllers by giving estimates of the distance of the synthesized controller to the maximal (i.e the most permissive) safety controller or to the time-optimal reachability controller. We showed the effectiveness of our approach by synthesizing controllers for a switched system. Let us remark that the techniques presented in the paper are independent of the type of abstractions considered as long as these are approximately bisimilar. 
Future work will deal with the development of similar approaches to handle different optimal control problems and richer specifications given e.g. in temporal logic.


\subsubsection*{Acknowledgments} The author would like to thank Gunther Rei{\ss}ig
for his valuable comments on an earlier version of this paper.

\bibliographystyle{alpha}
\bibliography{symcon}

\newcommand{\etalchar}[1]{$^{#1}$}
\begin{thebibliography}{ABD{\etalchar{+}}00}

\bibitem[ABD{\etalchar{+}}00]{Asarin2000}
E.~Asarin, O.~Bournez, T.~Dang, O.~Maler, and A.~Pnueli.
\newblock Effective synthesis of switching controllers for linear systems.
\newblock {\em Proc. IEEE}, 88(7):1011--1025, 2000.

\bibitem[AHLP00]{alur2000}
R.~Alur, T.~Henzinger, G.~Lafferriere, and G.~J. Pappas.
\newblock Discrete abstractions of hybrid systems.
\newblock {\em Proc. of the IEEE}, 88(2):971--984, 2000.

\bibitem[Ang02]{angeli2002}
D.~Angeli.
\newblock A {Lyapunov} approach to incremental stability properties.
\newblock {\em IEEE Trans. on Automatic Control}, 47(3):410--421, March 2002.

\bibitem[AVW03]{arnold2003}
A.~Arnold, A.~Vincent, and I.~Walukiewicz.
\newblock Games for synthesis of controllers with partial observation.
\newblock {\em Th. Comp. Sc.}, 28(1):7--34, 2003.

\bibitem[Ber00]{bertsekas2000}
D.~P. Bertsekas.
\newblock {\em Dynamic Programming and Optimal Control}.
\newblock Athena Scientific, 2000.

\bibitem[Gir10a]{girard10}
A.~Girard.
\newblock Synthesis using approximately bisimilar abstractions: state-feedback
  controllers for safety specifications.
\newblock In {\em Hybrid Systems: Computation and Control}, pages 111--120,
  2010.

\bibitem[Gir10b]{girard10a}
Antoine Girard.
\newblock Synthesis using approximately bisimilar abstractions: Time-optimal
  control problems.
\newblock In {\em IEEE Conf. Decision and Control}, 2010.

\bibitem[GP07]{A-GirPap07}
A.~Girard and G.~J. Pappas.
\newblock Approximation metrics for discrete and continuous systems.
\newblock {\em IEEE Trans. on Automatic Control}, 52(5):782--798, 2007.

\bibitem[GPT10]{B-GirPolTab08}
A.~Girard, G.~Pola, and P.~Tabuada.
\newblock Approximately bisimilar symbolic models for incrementally stable
  switched systems.
\newblock {\em IEEE Trans. on Automatic Control}, 55(1):116--126, 2010.

\bibitem[HvS01]{HabetsS01}
L.~C. G. J.~M. Habets and J.~H. van Schuppen.
\newblock Control of piecewise-linear hybrid systems on simplices and
  rectangles.
\newblock In {\em Hybrid Systems: Computation and Control}, volume 2034 of {\em
  LNCS}, pages 261--274. Springer, 2001.

\bibitem[KB06]{KloetzerB06}
M.~Kloetzer and C.~Belta.
\newblock A fully automated framework for control of linear systems from ltl
  specifications.
\newblock In {\em Hybrid Systems: Computation and Control}, volume 3927 of {\em
  LNCS}, pages 333--347. Springer, 2006.

\bibitem[Mil89]{milner1989}
R.~Milner.
\newblock {\em Communication and Concurrency}.
\newblock Prentice Hall, 1989.

\bibitem[MR99]{moor1999}
T.~Moor and J.~Raisch.
\newblock Supervisory control of hybrid systems within a behavioral framework.
\newblock {\em Systems and Control Letters}, 38(3):157--166, 1999.

\bibitem[MT00]{Mitchell2000}
I.~Mitchell and C.~Tomlin.
\newblock Level set methods for computation in hybrid systems.
\newblock In {\em Hybrid Systems: Computation and Control}, volume 1790 of {\em
  LNCS}, pages 310--323. Springer, 2000.

\bibitem[MT10]{mazo2010}
M.~{Mazo Jr.} and P.~Tabuada.
\newblock Approximate time-optimal control via approximate alternating
  simulations.
\newblock In {\em American Control Conference}, pages 10201--10206, 2010.

\bibitem[PGT08]{A-PolGirTab08}
G.~Pola, A.~Girard, and P.~Tabuada.
\newblock Approximately bisimilar symbolic models for nonlinear control
  systems.
\newblock {\em Automatica}, 44(10):2508--2516, 2008.

\bibitem[Rei09]{reissig2009}
G.~Rei{\ss}ig.
\newblock Computation of discrete abstractions of arbitrary memory span for
  nonlinear sampled systems.
\newblock In {\em Hybrid Systems: Computation and Control}, volume 5469 of {\em
  LNCS}, pages 306--320. Springer, 2009.

\bibitem[RO98]{raisch1998}
J.~Raisch and S.~{O'Young}.
\newblock Discrete approximation and supervisory control of continuous systems.
\newblock {\em IEEE Trans. on Automatic Control}, 43(4):569--573, 1998.

\bibitem[RW87]{Ramadge87}
P.~J. Ramadge and W.~M. Wonham.
\newblock Modular feedback logic for discrete event systems.
\newblock {\em SIAM J. on Con. and Opt.}, 25(5):1202--1218, 1987.

\bibitem[Tab09]{tabuada2009}
P.~Tabuada.
\newblock {\em Verification and Control of Hybrid Systems - A Symbolic
  Approach}.
\newblock Springer, 2009.

\bibitem[TI08]{tazaki08}
Y.~Tazaki and J.~I. Imura.
\newblock Finite abstractions of discrete-time linear systems and its
  application to optimal control.
\newblock In {\em 17th IFAC World Congress}, pages 10201--10206, 2008.

\bibitem[TP06]{tabuada2006}
P.~Tabuada and G.~J. Pappas.
\newblock Linear time logic control of discrete-time linear systems.
\newblock {\em IEEE Trans. on Automatic Control}, 51(12):1862--1877, 2006.

\end{thebibliography}

\end{document}